\tikzset{li/.style={line width=1pt}}
\newcommand{\CC}{{\mathbb C}}
\newcommand{\FF}{{\mathbb F}}
\newcommand{\RR}{{\mathbb R}}
\newcommand{\ee}{\mathrm{e}}
\newcommand{\dd}{\mathrm{d}}
\newcommand{\zz}{{\overline{z}}}
\newcommand{\sE}{\mathcal{E}}
\newcommand{\sV}{\mathcal{V}}
\theoremstyle{plain}
\newtheorem{thm}{Theorem}
\newtheorem{lem}[thm]{Lemma}
\newtheorem{cor}[thm]{Corollary}
\newtheorem{prop}[thm]{Proposition}
\newtheorem{quest}[thm]{Question}
\newtheorem{remark}[thm]{Remark}
\newtheorem{defn}[thm]{Definition}
\title{The five-twist identity for Feynman periods}
\date{}
\author{Oliver Schnetz}
\address{Oliver Schnetz\\
II. Institut f\"ur Theoretische Physik\\
Luruper Chaussee 149\\
22761 Hamburg, Germany}
\email{schnetz@mi.uni-erlangen.de}
\begin{document}

\begin{abstract}
We prove a new identity for Feynman periods that acts on five-vertex cuts of completed primitive Feynman graphs.
It is shown that in $\phi^4$ theory this identity is independent from existing identities which are the twist, the Fourier identity and the Fourier split.
\end{abstract}
\maketitle

\section{Introduction}
We consider a renormalizable quantum field theory (QFT) with dimensionless couplings. Such a QFT has a conformal symmetry in integer dimensions that is typically
broken the the existence of divergences, so that the QFT has to be renormalized (often by generalizing to 'non-integer dimensions'.
In particular, Feynman graphs that have the external structure of a vertex relate to Feynman integrals that have an overall logarithmic divergence
(if they are not forests). The simplest case are graphs with no additional (sub-)divergences.
Graphs of this type is called primitive in the set of Feynman graphs of the corresponding QFT.
With Definition \ref{compprim} and Proposition \ref{prop:conv1} we formulate primitivity in a purely graph theoretical setting without reference to Feynman integrals.

Residues of primitive graphs give rise to renormalization scheme independent contributions to the $\beta$-function of the vertex interaction;
the Feynman period of the graph \cite{BK,Census}.


Concretely, we fix a primitive graph $G$ and deletes the external legs (which carry the vertex structure) because they are insignificant for the calculation of the Feynman period.
In this article, we restrict ourselves to scalar QFTs (with spin zero bosons and no fermions).
Feynman periods in four-dimensional Yukawa-$\phi^4$ theory which has a spin zero boson and a spin $1/2$ fermion have been calculated in \cite{Sgft}.
Although in this article we are mostly interested in four-dimensional $\phi^4$ theory (Sections \ref{sect3a} and \ref{sect4}) it is natural to cosider integer dimensions
\begin{equation}
D=2\lambda+2\geq4
\end{equation}
and graphs with vertices of any degree and edges $e$ of any weights $\nu_e\in\RR$. In a weighted graph, the weighted degree of a vertex $v$ is the sum of
the weights of edges adjacent to $v$,
\begin{equation}\label{Nv}
N_v=\sum_{e\sim v}\nu_e.
\end{equation}
If all edges have weight $1$, the weighted degree of $v$ equals the degree of $v$, i.e. the number of edges adjacent to $v$.

Note that in this setup, Feynman graphs with tensor structure can be reduced to scalar integrals, so that the results cover a rather broad range of
applications that ultimately also includes Gauge theories.

In position space, the Feynman propagator that corresponds to the edge $e=xy$ with vertices $x,y\in\RR^D$ and weight $\nu_e$ is
\begin{equation}\label{eqpe}
p_e(x,y)=\frac1{||x-y||^{2\lambda\nu_e}}.
\end{equation}
Because the signature of the norm $||\cdot||$ has no effect on the Feynman period, we restrict ourselves to the Euclidean metric
\begin{equation}
||x||^2=x_1^2+\ldots+x_D^2.
\end{equation}
In a scalar theory, the vertex has no structure and we integrate over all vertices except for one vertex $0$ that is the origin of the coordinate system
(to break translational symmetry) and one vertex $1$ that we associate to any unit vector $z_1\in\RR^D$ (to break the scale symmetry). The rotational symmetry ensures
that the Feynman period does not depend on the orientation of $z_1$. The Feynman period $P_G$ of the primitive graph $G$ is the (convergent) integral
\begin{equation}\label{PG}
P_G=\Big(\prod_{v\neq0,1}\int_{\RR^D}\frac{\dd^Dx_v}{\pi^{D/2}}\Big)\prod_{e\in\sE_G}p_e(x)\in\RR_+,
\end{equation}
where the integral is over all `internal' vertices $\neq0,1$ and the integrand is the product over the propagators of the edges in $G$.
The Feynman period $P_G$ does not depend on the choice of the vertices $0$ and $1$.
In physics, this statement is known as cut and glue identity for the graph (the massless-$p$ integral) with external legs at 0 and 1 and the edge 01 deleted (if present).

The Feynman period $P_G$ can be expressed in terms of an integral over the graph polynomial by using Schwinger (Feynman) parameters.
This parametric expression for $P_G$ is most frequently used the literature, so that we indicate the connection although we will not need the results in the following.

The graph polynomial is defined as \cite{KIR}
\begin{equation}\label{PsiG}
\Psi_G(\alpha)=\sum_{T\subseteq G}\prod_{e\notin T}\alpha_e,
\end{equation}
where the sum is over all spanning trees in $G$. If all weighs are positive, $\nu_e>0$, we obtain a representation of the Feynman period in terms of a
projective integral (Corollary 26 in \cite{gfe} and the remark thereafter),
\begin{equation}\label{para}
P_G=\frac{\Gamma(\lambda+1)}{\prod_{e\in\sE_G}\Gamma(\lambda\nu_e)}\int_{\alpha_e>0}\Omega\frac{\prod_{e\in\sE_G}\alpha_e^{\lambda(1-\nu_e)}}{\Psi_G(\alpha)^{\lambda+1}}.
\end{equation}
Here, $\Gamma(x)=\int_0^\infty t^{x-1}\ee^{-t}\dd t$ is the gamma function and
\begin{equation}
\Omega=\sum_{e=1}^{|\sE_G|}(-1)^{e-1}\alpha_e\dd \alpha_1\wedge\ldots\wedge\widehat{\dd \alpha_e}\wedge\ldots\wedge\dd \alpha_{|\sE_G|}
\end{equation}
is the projective volume form. The integration is over the projective positive coordinate simplex. In practice, one works in an affine chart by setting one of the
variables to one, e.g.\ $\alpha_1=1$.

Up to date, the following natural question has no complete answer.
\begin{quest}\label{Q1}
Which primitive graphs have equal Feynman period?
\end{quest}
In the context of four-dimensional $\phi^4$ theory, D. Broadhurst and D. Kreimer worked out two transformations that leave the Feynman period invariant:
Planar duality and conformal symmetry \cite{BK}.
Planar duality is based on a Fourier transformation of the propagators (the Fourier identity).
While the Fourier identity is somewhat exceptional, it was shown in \cite{Census} that the conformal symmetry has a deeper structure.
Similar to all massless $p$-integrals leading to the same Feynman period (using scale symmetry or projective geometry), it was shown that
conformal symmetry effectively adds a vertex `$\infty$' to the graph. This vertex $\infty$ connects to all external legs of the original graph.
The resulting `completion' $\overline{G}$ of $G$ is homogeneous in the sense that it is a vacuum graph in the underlying QFT. The completed graph, e.g.,
is four-regular in $\phi^4$ theory and three-regular in $\phi^3$ theory (every vertex has degree four or three, respectively).

In analogy to $p$-integrals, one proves that any decompletion (opening up a vertex $\infty$) of a completed graph gives the same Feynman period.
Therefore, after completion one forgets the label $\infty$
(as well as the labels $0$ and $1$) and lifts the Feynman period to a number that is assigned to the unlabeled vacuum graph $\overline{G}$. Every completed graph
can be considered as an equivalence class of graphs with equal Feynman period. The equivalence class consists of the choices of three `external' vertices
$0$, $1$, $\infty$ in $\overline{G}$.
Accordingly, we write
\begin{equation}
P_{\overline{G}}=P_G,\quad\text{if $G=\overline{G}\setminus v$ is any decompletion of $\overline{G}$.}
\end{equation}

For a completed graph, primitivity can be formalted in purely combinatorial terms.
\begin{defn}\label{compprim}
A graph $\overline{G}$ is weighted $D/\lambda$ regular if $N_v=D/\lambda$ for all vertices $v$ in $\overline{G}$; see (\ref{Nv}).
A weighted $D/\lambda$ regular graph $\overline{G}$ is weighted internally $n$-connected if the sum of the weights of cut edges is $\geq n$ for every cut that
does not separate a single vertex.
The weighted internal edge connectivity of $\overline{G}$ is $n$ if $\overline{G}$ is weighted internally $n$-connected but not weighted internally $m$-connected for $m>n$
(i.e.\ $n$ is the weight of the minimum non-trivial edge cut).

A weighted $D/\lambda$ regular graph is completed primitive if its weighed internal edge connectivty is $>D/lambda$.
\end{defn}

Note that for completed primitivity it is essential to consider internal cuts because in a $D/\lambda$ regular graph, a cut that separates off s single vertex always has
weight $D/\lambda$.

In Section \ref{sect3} we will prove the following proposition which is the weighted analog of Proposition 2.6 in \cite{Census}, see Section \ref{sect3}.
\begin{prop}\label{prop:conv1}
A weighted graph $G$ is primitive (i.e.\ the Feynman period (\ref{PG}) exists) if and only if its completion is completed primitive.
\end{prop}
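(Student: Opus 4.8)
The plan is to analyze the convergence of the position-space integral (\ref{PG}) directly by power counting, rather than through the parametric form (\ref{para}), and to translate the resulting conditions into statements about edge cuts of the completion $\overline{G}$. Since the integrand $\prod_{e\in\sE_G}p_e(x)$ in (\ref{PG}) is strictly positive, convergence is equivalent to finiteness in the sense of Tonelli, so no cancellations can occur: the period exists if and only if the integrand is integrable over every region of configuration space, and a single non-integrable region already forces divergence. This positivity will be used in both directions.

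First I would recall (from the conformal symmetry and decompletion invariance quoted above) that $P_G=P_{\overline{G}}$ is computed by a conformally covariant integrand on the conformal compactification of $\RR^D$, with the three vertices $0$, $1$, $\infty$ serving to gauge-fix the conformal group. The condition that this integrand be conformally covariant -- equivalently, that the overall scaling of (\ref{PG}) be logarithmic -- is precisely that the weighted degree $N_v$ equal $D/\lambda$ at every vertex; see (\ref{Nv}). Thus weighted $D/\lambda$-regularity of $\overline{G}$ is equivalent to the period being a well-defined, scale-invariant number, and it is the first half of completed primitivity (Definition \ref{compprim}). I would record the ``only if'' part here: if some $N_v\neq D/\lambda$, the overall power count is non-critical and the integral cannot converge to a finite scale-invariant value.

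Assuming regularity, the only sources of divergence are coincidences of vertices. For a subset $S$ of the vertices of $\overline{G}$ whose points approach a common location at relative scale $\rho\to0$, the $|S|-1$ relative coordinates contribute a radial factor $\rho^{D(|S|-1)-1}\dd\rho$, while the edges internal to $S$ contribute $\rho^{-2\lambda w(E_S)}$, where $w(E_S)=\sum_{e\in E_S}\nu_e$. Hence this region is integrable exactly when
\[
D(|S|-1)>2\lambda\, w(E_S).
\]
Using regularity in the form $\sum_{v\in S}N_v=|S|\,D/\lambda$, i.e.\ $2w(E_S)+w(\partial S)=|S|\,D/\lambda$ where $\partial S$ is the edge cut separating $S$ from its complement, this inequality simplifies to $w(\partial S)>D/\lambda$. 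Because $0$, $1$, $\infty$ are fixed to distinct points, a coinciding cluster can contain at most one of them, so the relevant subsets are exactly those with $2\le|S|\le|V(\overline{G})|-2$; these are precisely the cuts of Definition \ref{compprim} that do not separate a single vertex, whereas trivial cuts always have weight exactly $D/\lambda$ and produce no coincidence. Collecting the conditions over all such $S$ shows that, given regularity, the period converges if and only if every non-trivial cut has weight $>D/\lambda$, i.e.\ the weighted internal edge connectivity of $\overline{G}$ exceeds $D/\lambda$.

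The two directions then follow. For ``if'', completed primitivity supplies strictly convergent power counts on every coincidence region, and a Hepp-type sector decomposition of configuration space by the nested coincidence structure -- in which each sector is dominated by the cut of its finest cluster -- upgrades these local estimates to absolute convergence of (\ref{PG}). For ``only if'', the failure of completed primitivity produces either a non-critical overall scaling (regularity fails) or a cut $\partial S$ with $w(\partial S)\le D/\lambda$; the corresponding coincidence region is then non-integrable, and by positivity of the integrand this forces $P_G=\infty$. I expect the main obstacle to be the ``if'' direction: one must show that checking the single-cluster power count for every subset $S$ is sufficient, which requires organizing the overlapping and nested coincidences into sectors and verifying that in each sector the worst-case degree is governed by a cut weight. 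This is the position-space analog of Weinberg's convergence theorem, and the strict inequalities furnished by completed primitivity are exactly what is needed to close the estimates.
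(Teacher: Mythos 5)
Your combinatorial reduction is the right one, and it is in fact the same computation the paper performs: the half-edge identity $2w(E_S)+w(\partial S)=|S|\,D/\lambda$ converts the single-cluster power count $D(|S|-1)>2\lambda w(E_S)$ into the cut condition $w(\partial S)>D/\lambda$, and internal cuts correspond (up to complementation) to subsets containing at most one of $0,1,\infty$. The gap is the analytic heart of the ``if'' direction. You need that subset-by-subset power counting is \emph{sufficient} for absolute convergence of (\ref{PG}), and you dispose of this in one sentence by invoking ``a Hepp-type sector decomposition \ldots the position-space analog of Weinberg's convergence theorem''. That sufficiency statement \emph{is} the entire difficulty of the proposition: one must actually decompose configuration space into sectors indexed by nested clusters, bound each sector by the worst cluster it contains, and also handle clusters escaping to infinity (your appeal to the conformal compactification is likewise only gestured at; the inversion is never performed). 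You flag this yourself as ``the main obstacle'', which is an accurate self-diagnosis: as written, the proof is incomplete exactly where the real work sits. The paper discharges this step differently and completely: it adjoins an isolated external vertex $z$, so that $P_{\overline{G}\setminus\{\infty\}}$ becomes the (constant) graphical function $f^{(\lambda)}_{\overline{G}\cup\{z\}}(z)$, and then cites the known convergence criterion for graphical functions (Proposition 11 of \cite{gfe}), which asserts precisely the needed ``exists if and only if $N_{\overline{G}[\sV]}<(|\sV^{\mathrm{ext}}|-1)D/2\lambda$ for all $\sV$'' equivalence. To close your argument you would either have to prove the position-space Weinberg theorem in the weighted setting or cite a precise statement of it; neither is done.

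There is also a genuine error in your treatment of regularity. You claim that if some $N_v\neq D/\lambda$ then the integral ``cannot converge''. That is false: convergence of (\ref{PG}) is governed by finitely many \emph{strict} inequalities and is therefore an open condition on the weights $\nu_e$, while regularity is an equality. Perturbing the weights of a convergent regular graph slightly destroys regularity at every vertex but preserves all the strict power counts, hence preserves convergence. What regularity actually buys is not convergence but well-definedness of the \emph{period}: independence of the choice of $0$, $1$, $\infty$ (decompletion/conformal invariance). In the paper this issue never arises, because the completion $\overline{G}$ is weighted $D/\lambda$ regular by construction --- the standing assumption is that $G$ has the external structure of a vertex --- so the proof only has to establish ``period exists $\Leftrightarrow$ every internal cut has weight $>D/\lambda$'', which is the statement both you and the paper reduce to cut weights.
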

The physical interpretation of the proposition is that any edge cut with total weight $\leq D/\lambda$ has a divergent insertions on both sides of the cut
and hence, after the deletion of $\infty$ on one side, a subdivergence on the other side.

In \cite{Census}, an entirely new transformation of completed primitive graphs was found: the twist,
\begin{equation}\label{twist}
\begin{tikzpicture}[baseline={([yshift=-1.3ex]current bounding box.center)}]
    \coordinate[label=above:$a$] (va) at (0,3);
    \coordinate[label=above:$b$] (vb) at (0,2);
    \coordinate (vm) at (0,1.5);
    \coordinate[label=above:$c$] (vc) at (0,1);
    \coordinate[label=above:$d$] (vd) at (0,0);
    \filldraw (va) circle (1.3pt);
    \filldraw (vb) circle (1.3pt);
    \filldraw (vc) circle (1.3pt);
    \filldraw (vd) circle (1.3pt);
    \draw[fill=black!20] (va) arc (90:270:3 and 1.5) arc (270:90:.5) arc (270:90:.5) arc (270:90:.5);
    \draw[fill=black!20] (va) arc (90:-90:3 and 1.5) arc (-90:90:.5) arc (-90:90:.5) arc (-90:90:.5);
    \node (G1) at (-1.5,1.5) {$G_1$};
    \node (G2) at (+1.5,1.5) {$G_2$};
    \node (G) at (-2,0) {$\overline{G}$};
%
%
\end{tikzpicture}
\quad = \quad
\begin{tikzpicture}[baseline={([yshift=-1.3ex]current bounding box.center)}]
    \coordinate[label=above:$a$] (va) at (0,3);
    \coordinate[label=above:$b$] (vb) at (0,2);
    \coordinate (vm) at (0,1.5);
    \coordinate[label=below:$c$] (vc) at (0,1);
    \coordinate[label=above:$d$] (vd) at (0,0);
    \draw[fill=black!20] (vb) .. controls (-1,3) and (-3,3) .. (-3,1.5) .. controls (-3,0) and (-1,0) .. (vc) .. controls (-1,.5) .. (-1,.75) .. controls (-1,1) .. (vd) .. controls (-1,1.5) .. (va) .. controls (-1,2) .. (-1,2.25) .. controls (-1,2.5) .. (vb);
    \draw[fill=black!20] (-1,.75) .. controls (-1,1) .. (vd) .. controls (-1,1.5) .. (va) .. controls (-1,2) .. (-1,2.25);
    \draw[fill=black!20] (va) arc (90:-90:3 and 1.5) arc (-90:90:.5) arc (-90:90:.5) arc (-90:90:.5);
    \draw[dashed,black!50,thick] (vb) -- (vc) arc (-90:90:1) arc (90:-90:1.5) arc (-90:90:1);
    \node (G1) at (-1.75,1.5) {$G_1$};
    \node (G2) at (+2,1.5) {$G_2$};
    \node (G) at (-2,0) {$\overline{G}'$};
%
%
%
    \filldraw (va) circle (1.3pt);
    \filldraw (vb) circle (1.3pt);
    \filldraw (vc) circle (1.3pt);
    \filldraw (vd) circle (1.3pt);
\end{tikzpicture}\;.
\end{equation}

If $\overline{G}$ has a four-vertex split $a,b,c,d$ into $G_1$ and $G_2$, then $G_1$ can be twisted and edges can be added along the dashed
four-cycle $acbd$ such that the twisted graph $\overline{G}'$ is a completed Feynman graph. Then
\begin{equation}
P_{\overline{G}}=P_{\overline{G}'}.
\end{equation}
Originally, the twist was defined in four dimensions, but it generalizes to any dimension $D$ \cite{gfe}.

Four-vertex splits can be combined with planar duality to the Fourier split \cite{Furtherphi4}.

All known identities operate on completed graphs that have a nontrivial four-vertex split or a planar decompletion.
In this article, we prove a new identity on Feynman periods which neither requires a four-vertex split nor a planar decompletion.
This five-twist identity can be considered as a twist in a five-vertex cut of the completed graph.

\begin{figure}
\begin{tikzpicture}
\begin{scope}[local bounding box=Gbar]
    \filldraw[li,fill=black!10] (0,0) ellipse (3 and 2);
    \filldraw[li,fill=white] (0,0) ellipse (2 and 1.16);
    \filldraw[li,fill=black!10] (2,0) -- (1,1) -- (-1,1) -- (-1,-1) -- (1,-1) -- (2,0);
    \fill (2,0) circle (3pt) node[anchor=east] {$\infty\,$};
    \fill (1,1) circle (3pt);
    \fill (-1,1) circle (3pt);
    \fill (1,-1) circle (3pt);
    \fill (-1,-1) circle (3pt);
    \node[below=0.2 of Gbar] {$\overline{G}$};
\end{scope}

\begin{scope}[xshift=200,local bounding box=G]
    \filldraw[li,fill=black!10] (0,0) ellipse (3 and 2);
    \filldraw[li,fill=white] (0,0) ellipse (2 and 1.16);
    \filldraw[li,fill=black!10] (-1,-1) rectangle (1,1);
    \draw[li,dashed] (-1,-1) -- (1,1);
    \draw[li,dashed] (-1,1) -- (1,-1);
    \fill (1,1) circle (3pt);
    \fill (-1,1) circle (3pt);
    \fill (1,-1) circle (3pt);
    \fill (-1,-1) circle (3pt);
    \node at (0.6,0) {$G_1$};
    \node at (-2.42,0) {$G_2$};
    \node[below=0.2 of G] {$G=\overline{G}\setminus\infty$};
\end{scope}
\end{tikzpicture}
\caption{The five-twist on the completed graph (left) and as reflections along diagonals of a square in the decompleted graph (right).
The shaded areas stand for any subgraphs. Only if the graph $G_1$ has specific properties, the five-twist becomes an identity for Feynman periods.}
\label{fig:5twist}
\end{figure}

We label one of the five split vertices $\infty$ and delete it from $\overline{G}$. The resulting decompletion $G$ decomposes along the four remaining split vertices
into $G_1$ and $G_2$. The split gives an edge bipartition of $G$ into $G_1$ and $G_2$ while the vertices of $G_1$ and $G_2$ share the split vertices; see Figure \ref{fig:5twist}.

Edges between the four split vertices can either belong to $G_1$ or to $G_2$ giving rise to different splits. Without restriction, however, we can assume
that $G_1$ has no such edge, so that $G_2$ is the subraph of $G$ that is induced by its vertices. Other cases do not lead to new identities.

We assume that neither $G_1$ nor $G_2$ is empty (otherwist the five-twist is trivial).
The graphs $G_2$, however, may only have the split vertices and the edges between the split vertices.
A non-trivial example of this type is depicted in Figure \ref{fig:P72} of Section \ref{sect3a}.

Feynman rules associate four-point functions to the split graphs. The idea is to transform the graph $G_1$ without changing
its four-point function and replace $G_1$ in $G$ by its transformation. Then, the Feynman period does not change.

In Section \ref{sect2}, we prove that an internally completed four-point function\footnote{A grpah is internally completed if its internal vertices have weighted degree $D/\lambda$.}
is invariant under a double transposition of its external vertices if (and only if) the degrees of the external vertices are stable under the transposition.
We cannot use this invariance directly because the only setup in which $G_1$ is internally completed is when $\infty$ does not connect to the interior of $G_1$.
In this case, the four split vertices in $G$ also give a four-vertex split of the completed graph $\overline{G}$ and we obtain the standard twist.

However, it can happen that $G_1$ is externally planar\footnote{A graph is externally planar if it has a planar embedding such that the external vertices are on the outer face.}.
Because any Fourier transform is invertible, the four-point integral of $G_1$ is determined by its Fourier transform.
Up to a constant factor, the Fourier transform of $G_1$  which is given by the four-point function of its planar dual $G_1^\ast$.
It may also happen that $G_1^\ast$ is internally completed. In $\phi^4$ theory, this is the case if (and only if) $G_1$ is a mesh of squares; see Figure \ref{fig:G1}.
In this setup, the four-point function of the dual $G_1^\ast$ is invariant under double transpositions that do not change the degrees of its external vertices.
This means that $G_1$ is reflected along one or both of its diagonals while the outer face of $G_1$ keeps the number of edges between the external
vertices.

If both $G_1$ and $G_2$ have vertices of degree three, the vertex $\infty$ in $\overline{G}$ connects to $G_1$ and to $G_2$. In this case, the five vertex split of $\overline{G}$
is not also a four-vertex split (by dropping $\infty$ from the split vertices). However, even in the case that $G_2$ only has vertices of degree four and the five vertex split
is also a four vertex split, the five-twist can give a new identity.

\begin{figure}
\begin{tikzpicture}[scale=0.8]
\begin{scope}[local bounding box=G11]
    \draw[li] (-1,-1) -- (1,-1) -- (-1,1) -- (-1,-1);
    \draw[li] (1,1) -- (-0.4,-0.4);
    \draw[li] (-0.4,-1) -- (-0.4,-0.4) -- (-1,-0.4);
    \fill (-1,-1) circle (3pt);
    \fill (-1,-0.4) circle (3pt);
    \fill (-1,1) circle (3pt);
    \fill (-0.4,-1) circle (3pt);
    \fill (-0.4,-0.4) circle (3pt);
    \fill (1,-1) circle (3pt);
    \fill (0,0) circle (3pt);
    \fill (1,1) circle (3pt);
\end{scope}

\begin{scope}[xshift=150,local bounding box=G12]
    \draw[li] (-1,-1) -- (1,-1) -- (-1,1) -- (-1,-1);
    \draw[li] (1,1) -- (0.4,0.4);
    \draw[li] (0,0) -- (-0.4,-0.4);
    \draw[li] (-0.4,-1) -- (-0.4,-0.4) -- (-1,-0.4);
    \draw[li] (-1,1) -- (0.4,0.4) -- (1,-1);
    \fill (-1,-1) circle (3pt);
    \fill (-1,-0.4) circle (3pt);
    \fill (-1,1) circle (3pt);
    \fill (-0.4,-1) circle (3pt);
    \fill (-0.4,-0.4) circle (3pt);
    \fill (1,-1) circle (3pt);
    \fill (0,0) circle (3pt);
    \fill (0.4,0.4) circle (3pt);
    \fill (1,1) circle (3pt);
\end{scope}

\begin{scope}[xshift=300,local bounding box=G13]
    \draw[li] (-1,-1) rectangle (1,1);
    \draw[li] (-1,0) -- (-0.4,0) -- (0,1) -- (0.4,0) -- (1,0);
    \draw[li] (-0.4,0) -- (0,-1) -- (0.4,0);
    \fill[fill=black!30] (-1,0) circle (3pt);
    \fill[fill=black!30] (0,1) circle (3pt);
    \fill[fill=black!30] (1,0) circle (3pt);
    \fill[fill=black!30] (0,-1) circle (3pt);
    \fill (-1,-1) circle (3pt);
    \fill (1,1) circle (3pt);
    \fill (-1,1) circle (3pt);
    \fill (1,-1) circle (3pt);
    \fill (-0.4,0) circle (3pt);
    \fill (0.4,0) circle (3pt);
\end{scope}
\end{tikzpicture}
\caption{Some small examples of insertions $G_1$ in the five-twist (see main text). The twist vertices are on the external square. Larger examples are in Figure \ref{fig:G1a}.}
\label{fig:G1}
\end{figure}

\begin{defn}\label{def:5twist}
Let $G$ be a primitive graph $D=2\lambda+2$ dimensions.
Assume there exist four vertices in $G$ such that the edges of $G$ split into $G_1$ and $G_2$ as depicted in Figure \ref{fig:5twist} and
\begin{enumerate}
\item the graph $G_1$ is planar with the cut vertices on the outer face,
\item for each internal face of $G_1$, the sum of the weights of its $N$ edges is $(N-2)D/2\lambda$, and
\item the total weight of the edges between external vertices on the outer face does not change under the reflection(s).
\end{enumerate}
Then, any graph that is obtained from $G$ by a reflections along one or both of the dashed diagonals is a five-twist of $G$.

Two completed primitive graphs are related by a five-twist if any of their decompletions are five-twists.
\end{defn}

Like all other identities, the five-twist does not alter the loop order; it acts inside a given loop order.
Moreover, the inverse of a five-twist is also a five-twist, so that five-twists generate a finite group action on
(completed) primitive graphs of a certain loop order.

The main result of this article is the following theorem.
\begin{thm}\label{thm:main}
If the primitive graphs $G$ and $G'$ are related by a five-twist, then their Feynman periods (\ref{PG}) are equal, $P_{G'}=P_G$.
\end{thm}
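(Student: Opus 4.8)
The plan is to reduce Theorem~\ref{thm:main} to the invariance statement of Section~\ref{sect2} by passing through a Fourier transform (planar duality) on the insertion $G_1$. The overall strategy follows the logic sketched in the introduction: the Feynman period $P_G$ is an integral over the internal vertices of $G$, and the subgraph $G_1$ enters only through the four-point function that it defines on its four cut vertices $a,b,c,d$. If we replace $G_1$ by any other subgraph with the \emph{same} four-point function (as a function of the four external positions), then the integrand of (\ref{PG}) is unchanged after integrating out the internal vertices of $G_1$, and hence $P_G$ is unchanged. So the entire problem is to show that a reflection of $G_1$ along one or both diagonals of the cut square preserves its four-point function.

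First I would set up the four-point function of $G_1$ precisely: fixing the positions $x_a,x_b,x_c,x_d\in\RR^D$ of the four cut vertices, integrate $\prod_{e\in\sE_{G_1}}p_e$ over all internal vertices of $G_1$ with the measure $\dd^Dx_v/\pi^{D/2}$. This is a function $F_{G_1}(x_a,x_b,x_c,x_d)$. A reflection of $G_1$ along a diagonal, say $ac$, is the same subgraph with the roles of $b$ and $d$ interchanged, so what must be shown is the double-transposition invariance $F_{G_1}(x_a,x_b,x_c,x_d)=F_{G_1}(x_a,x_d,x_c,x_b)$ (and likewise for the other diagonal). The invariance result of Section~\ref{sect2} gives exactly this for an \emph{internally completed} four-point function whose external degrees are stable under the transposition --- which is condition~(3) of Definition~\ref{def:5twist}. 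The difficulty is that $G_1$ itself is typically \emph{not} internally completed: the missing weight at its internal vertices is precisely what would have gone to the absent vertex $\infty$. This is where planar duality enters.

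The key step is therefore to apply the Fourier identity to $G_1$. By condition~(1), $G_1$ is externally planar, so it has a planar dual $G_1^\ast$, and up to an explicit overall constant the Fourier transform of $F_{G_1}$ equals the four-point function $F_{G_1^\ast}$ of the dual. Condition~(2) --- that each internal face of $G_1$ with $N$ edges has edge-weight sum $(N-2)D/2\lambda$ --- is engineered to guarantee that the dual $G_1^\ast$ \emph{is} internally completed, since the face-weight condition on $G_1$ translates under duality into the vertex weighted-degree condition $N_v=D/\lambda$ on $G_1^\ast$ (this is the weighted-graph dictionary already used in Definition~\ref{compprim} and Proposition~\ref{prop:conv1}). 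I would verify this dualization of the weight bookkeeping carefully: a face of $G_1$ bounded by $N$ edges becomes a degree-$N$ vertex of $G_1^\ast$, and the Fourier rule sends the weight $\nu_e$ to $\lambda^{-1}$ minus (a multiple of) $\nu_e$, so the summed-weight condition on faces becomes the weighted-degree condition on the dual vertex. Now apply Section~\ref{sect2} to $G_1^\ast$: its internal vertices have weighted degree $D/\lambda$, and by condition~(3) the external degrees are stable under the transposition, so $F_{G_1^\ast}$ is double-transposition invariant. Since the Fourier transform is invertible and the constant prefactor is symmetric in the external points, the invariance of $F_{G_1^\ast}$ pulls back to invariance of $F_{G_1}$ itself.

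Finally I would assemble the pieces. Having shown $F_{G_1}$ is invariant under each admissible diagonal reflection, the replacement of $G_1$ by its reflection inside $G$ leaves the full integrand of (\ref{PG}) invariant after integrating out the internal vertices of $G_1$; the vertices $a,b,c,d$ and the whole of $G_2$ are untouched, and by Proposition~\ref{prop:conv1} the reflected graph $G'$ is again primitive (the weighted degrees and internal connectivity are preserved by a reflection), so $P_{G'}$ is well-defined and equals $P_G$. The statement for completed graphs follows because the five-twist relation on $\overline{G}$ and $\overline{G}'$ is \emph{defined} through their decompletions, and the decompletion choice does not affect the period. \textbf{The main obstacle} I expect is the weight accounting in the dualization step: one must confirm that condition~(2) is exactly the right face-weight normalization so that $G_1^\ast$ comes out internally completed \emph{and} that the external vertices of $G_1^\ast$ inherit the degree-stability needed to invoke Section~\ref{sect2}, since any off-by-a-factor error here breaks the link to the invariance lemma. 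A secondary subtlety is ensuring that the constant prefactor produced by the Fourier identity is genuinely independent of the external positions (so that it cancels in the invariance comparison and does not contribute to the period), and that the edge-bipartition convention ($G_1$ carrying no edges between cut vertices) is consistent with how the weights are split in $G$.
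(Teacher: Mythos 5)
Your proposal is correct and follows essentially the same route as the paper's proof: Fourier transform (planar duality) applied to the insertion $G_1$, with condition~(2) guaranteeing via the dual weight rule $\nu_e^\ast=D/2\lambda-\nu_e$ that $G_1^\ast$ is internally completed, condition~(3) giving the degree stability needed for the double-transposition invariance of Section~\ref{sect2}, and invertibility of the Fourier transform pulling the invariance back to $G_1$. The weight bookkeeping you flag as the main obstacle is exactly the short computation $\sum_{e\in\mathrm{face}\,x}(D/2\lambda-\nu_e)=ND/2\lambda-(N-2)D/2\lambda=D/\lambda$ carried out in the paper, so your outline fills in correctly.
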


We will prove in Theorem \ref{thm:main} that the periods of two graphs are equal if they are five-twists.

Because a completed primitive graph can have many non-isomorphic decompletions, it is useful to consider the five-twist as a transformation of a completed graph
by the following steps,
\begin{enumerate}
\item decomplete,
\item four-vertex split,
\item dualize,
\item twist,
\item dualize,
\item four-vertex glue,
\item complete.
\end{enumerate}

The article is organized as follows.
In the next section we prove a formula for the double transposition of internally completed four-point integrals (these are conformal integrals in Super Yang-Mills theories \cite{SYM}).
Then, we prove the five-twist identity for scalar QFTs in Section \ref{sect3}.
In Section \ref{sect3a} we specialize to four-dimensional $\phi^4$ theory and can be read independently.
Those readers who are only interested in $\phi^4$ theory may skip Sections \ref{sect2} and \ref{sect3} and proceed directly to Section \ref{sect3a}.
It can also be useful to read Section \ref{sect3a} as an extended introduction before going to Sections \ref{sect2} and \ref{sect3}.
In Section \ref{sect4} we finally present exhaustive lists of five-twists until loop order eleven in $\phi^4$ theory and prove the independence of the five-twist for
the graph $P_{9,103}$.

Applications of the five-twist beyond $\phi^4$ theory have not yet been studied.
We emphasize that the five-twist is insufficient to (even experimentally) anwer Question \ref{Q1}.
The main purpose of this article is to show that simple ideas may lead to new identities. We hope to inspire the community to search for the missing transformation(s).
A better understanding of integral identities can have implications beyond Feynman periods because typically it is possible to extend the identities
to graphs with subdivergences or even to graphs in QFTs with a wider particle content.

\section*{Acknowlegements}
The author is supported by the DFG-grant SCHN 1240/3-1.

\section{Internally completed four-point integrals}\label{sect2}
Like most other known identities, the five-twist is proved in position space.
We consider Feynman graphs with four external vertices $z_0$, $z_1$, $z_2$, $z_3$, such that every other (internal) vertex
has weighted degree $D/\lambda$.

\begin{defn}
A graph $G$ with $N$ external vertices is internally completed if every other (internal) vertex has weighted degree $D/\lambda$.
To any external vertex $z_i$, $i=0,\ldots,N-1$, we associate a (position space) vector $z_i\in\RR^D$.
The position space Feynman integral of $G$ is
\begin{equation}\label{AG}
A_G(z_0,\ldots,z_{N-1})=\Big(\prod_v\int_{\RR^D}\frac{\dd^Dx_v}{\pi^{D/2}}\Big)\prod_{e\in\sE_G}p_e(x,z),
\end{equation}
where the integration is over the internal vertices $x_i\in\RR^D$ and the propagator $p_e$ depends on $z$ if $e$ connects to an external vertex.

For $i,j\in\{0,\ldots,N-1\}$ we define
\begin{equation}
z_{ij}=z_i-z_j.
\end{equation}
\end{defn}

The (convergent) Feynman integral of an internally completed four-point graph is given by a `graphical function' \cite{gf,gfe}.
Graphical functions are single-valued real-analytic functions on $\CC\backslash\{0,1\}$ which are defined by the Feynman integral of a three-point function.
We assume that the reader is familiar with the basic properties of graphical functions which are summarized in the first sections of \cite{gfe}.
In particular, it is costumary in the theory of graphical functions to replace the external labels $z_0,z_1,z_2,z_3\in\RR^D$ with $0,1,z,\infty$ that are also points
on the Riemann sphere $\CC\cup\{\infty\}$. With this identification, Equation (\ref{inv}) becomes tautological.

\begin{prop}\label{prop:AfG}
Let $G$ be an internally completed four-point graph. Let $G_{01z}=G\backslash\{3\}$ be $G$ after the deletion of vertex $z_3$ (together with its adjacent edges)
and label $z$ for $z_2$. Then, $G_{01z}$ is the graph of the graphical function $f_{G_{01z}}^{(\lambda)}(z)$ and
\begin{align}\label{AfG}
&A_G(z_0,z_1,z_2,z_3)\\\nonumber
&\quad=||z_{10}||^{\lambda(-N_0-N_1-N_2+N_3)}||z_{30}||^{\lambda(-N_0+N_1+N_2-N_3)}||z_{31}||^{\lambda(N_0-N_1+N_2-N_3)}||z_{32}||^{-2\lambda N_2}f_{G_{01z}}^{(\lambda)}(z),
\end{align}
where $z_0$, $z_1$, $z_2$, $z_3$ are related to $z\in\CC$ and its complex conjugate $\zz$ via the invariants
\begin{equation}\label{inv}
\frac{||z_{20}||^2||z_{31}||^2}{||z_{10}||^2||z_{32}||^2}=z\zz,\qquad \frac{||z_{21}||^2||z_{30}||^2}{||z_{10}||^2||z_{32}||^2}=(z-1)(\zz-1).
\end{equation}
\end{prop}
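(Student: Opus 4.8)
The plan is to exploit the conformal covariance that the internally completed hypothesis confers on $A_G$. The essential point is that the condition $N_v=D/\lambda$ at every internal vertex is exactly the condition that makes the integration at that vertex conformally invariant. Concretely, under the inversion $\iota\colon x\mapsto x/||x||^2$ one has $||\iota x-\iota y||=||x-y||/(||x||\,||y||)$ and $\dd^Dx\mapsto\dd^Dx/||x||^{2D}$, so after the substitution $x_v\mapsto\iota x_v$ the integrand of (\ref{AG}) collects the factor $||x_v||^{2\lambda N_v-2D}$ at each internal vertex $v$; the hypothesis kills this factor, while each external vertex $z_i$ produces a boundary factor $||z_i||^{2\lambda N_i}$. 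Together with the manifest translation, rotation and scaling behaviour, this is the uniqueness relation of \cite{gfe} and shows that $A_G$ is a conformally covariant four-point function carrying weight $2\lambda N_i$ at $z_i$; I would either rederive it as above or cite it from the theory of graphical functions.

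Second, I would read the claimed right-hand side of (\ref{AfG}) as a prefactor in the distances $||z_{ij}||$ times $f_{G_{01z}}^{(\lambda)}$ evaluated at the $z$ determined by (\ref{inv}). Since the two combinations in (\ref{inv}) are conformally invariant (a direct check: the $||z_i||$ factors produced by $\iota$ cancel within each ratio), the factor $f_{G_{01z}}^{(\lambda)}$ has weight zero, so the entire conformal weight must be carried by the prefactor. A short bookkeeping—collecting, at each $z_i$, the $||z_i||$ powers produced by inverting the distances to $z_i$—shows that the four exponents displayed in (\ref{AfG}) yield exactly weight $2\lambda N_i$ at every external vertex. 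Hence both sides of (\ref{AfG}) are conformally covariant with identical weights.

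Third, I would fix the conformal frame $z_0=0$, $z_1=e_1$, $z_2=z$, $z_3\to\infty$. This is legitimate because any four points are conformally equivalent to four coplanar points: one maps the triple $z_0,z_1,z_3$ to $0,1,\infty$, and $0,1,\infty$ together with the image of $z_2$ are automatically coplanar, so $z$ is a genuine complex number and $\zz$ its conjugate. In this frame (\ref{inv}) reduces to $z\zz$ and $(z-1)(\zz-1)$, identifying $z$ with the standard variable of the graphical function. On one hand, the prefactor tends to $||z_3||^{-2\lambda N_3}$ in the limit, because the finite distance $||z_{10}||=1$ carries exponent zero while the three distances to $z_3$ carry total exponent $-2\lambda N_3$. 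On the other hand, in the same limit the propagators at $z_3$ in (\ref{AG}) factor out as $||z_3||^{-2\lambda N_3}$ and the residual integral over the internal vertices, with external points held at $0,1,z$ and the edges at $z_3$ deleted, is by definition the graphical function $f_{G_{01z}}^{(\lambda)}(z)$ of $G_{01z}=G\backslash\{3\}$. Thus both sides of (\ref{AfG}) agree in this frame, and since a covariant function is determined on a conformal orbit by its value at a single representative—and the chosen frame represents every orbit, the orbits being labelled by the invariants (\ref{inv})—agreement there forces agreement everywhere.

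The main obstacle I anticipate is not the conformal skeleton but the analytic control of the limit $z_3\to\infty$: one must justify exchanging the limit with the convergent integration in (\ref{AG}) and confirm that the residual three-point integral is \emph{exactly} the graphical function with no spurious factor, so that the prefactor normalization matches the conventions of \cite{gfe}. Convergence of $A_G$, and of $f_{G_{01z}}^{(\lambda)}$, is guaranteed by internal completion together with the connectivity that excludes subdivergences; establishing these convergence statements and pinning down the normalization is where the real care is needed, the covariance argument being otherwise routine.
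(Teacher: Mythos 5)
Your core computation is the same one the paper relies on: under $x\mapsto \tilde x=x/||x||^2$ the Jacobian $||x||^{-2D}$ cancels the propagator factors $||x||^{2\lambda N_v}$ precisely when $N_v=D/\lambda$, external vertices acquire the factors $||z_i||^{2\lambda N_i}$, and your exponent bookkeeping for the prefactor (the exponents of the distances meeting $z_i$ sum to $-2\lambda N_i$, so both sides of (\ref{AfG}) transform identically) is correct. Where you differ is the packaging: you promote the computation to a covariance statement and then try to verify the identity on the orbit representative $(0,e_1,z,\infty)$ via a limit $z_3\to\infty$, concluding by an orbit argument. The paper instead performs the frame-fixing \emph{exactly}: it translates, $A_G(z_0,z_1,z_2,z_3)=A_G(z_{03},z_{13},z_{23},0)$, and applies the inversion as an exact change of variables in (\ref{AG}). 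The propagators to the origin become pure powers, $p_e(\tilde x,0)=||x||^{2\lambda\nu_e}$, which are absorbed into the internal-vertex cancellation, so the vertex $z_3$ is removed with no limiting procedure; what remains is $||\tilde z_{03}||^{2\lambda N_0}||\tilde z_{13}||^{2\lambda N_1}||\tilde z_{23}||^{2\lambda N_2}A_{G\backslash\{3\}}(\tilde z_{03},\tilde z_{13},\tilde z_{23})$, and Equation (16) of \cite{gfe} identifies this three-point integral with $||\tilde z_{13}-\tilde z_{03}||^{-2\lambda N_{G_{01z}}}f_{G_{01z}}^{(\lambda)}(z)$. The prefactor and the invariants (\ref{inv}) then follow from the elementary identity $||\tilde z_{i3}-\tilde z_{j3}||^2=||z_{ij}||^2/(||z_{i3}||^2||z_{j3}||^2)$ together with the half-edge count $N_{G_{01z}}=(N_0+N_1+N_2-N_3)/2$; this citation and count are exactly what pin down the normalization you were worried about.

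The gap in your version is the one you flag, and it is genuine rather than cosmetic. First, $(0,e_1,z,\infty)$ is not a configuration of four points in $\RR^D$, so $A_G$ is not defined there; ``agreement in the frame'' is only a statement about limits, and the approximation $||x_v-z_3||\approx||z_3||$ is not uniform over the unbounded integration domain, so interchanging $\lim_{z_3\to\infty}$ with the convergent integral needs an actual dominated-convergence estimate. Second, along the path $(0,e_1,z,w)$ with $w\to\infty$ the invariants (\ref{inv}) are not constant, only convergent, so your orbit argument needs a further continuity statement (the ratio of the two sides is a function of the invariants, and you must know it is continuous, or else construct corrected configurations carrying the exact invariant values) before agreement in the limit propagates to finite configurations. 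Both difficulties disappear once you observe that the conformal map sending $z_3$ to $\infty$ is exactly translation by $z_3$ followed by inversion, and that applying it \emph{inside} the integral, rather than as a limit of configurations, turns your frame verification into an identity between convergent integrals. Doing so converts your proposal into the paper's proof; as it stands, the limiting step is the missing piece.
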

The Feynman integral $A_G(z_0,z_1,z_2,z_3)$ is regular if $z\in\CC\backslash\{0,1\}$.
\begin{proof}
By translational invariance, we have
$$
A_G(z_0,z_1,z_2,z_3)=A_G(z_{03},z_{13},z_{23},0).
$$
For every vector $0\neq x\in\RR^D$, we consider the inversion $x\mapsto\tilde x=x/||x||^2$. We obtain that, for the edge $e=xy$ with weight $\nu_e$, the propagator $p_e$ (Equation (\ref{eqpe}))
transforms according to
$$
p_e(\tilde x,\tilde y)=(||x||\,||y||)^{2\lambda\nu_e}p_e(x,y),\qquad p_e(\tilde x,0)=||x||^{2\lambda\nu_e}.
$$
Inverting the internal variable $x$ gives a factor of $||x||^{-2D}$ from the integration measure. This compensates the factor $||x||^{2\lambda D/\lambda}$ from the transformation
of the propagators. We conclude from the above equations that all propagators that connect to $0$ vanish
while the external variables $z_{ij}$ are transformed to $\tilde z_{ij}$ (because inversion is an involution).
We obtain
$$
A_G(z_{03},z_{13},z_{23},0)=||\tilde z_{03}||^{2\lambda N_0}||\tilde z_{13}||^{2\lambda N_1}||\tilde z_{23}||^{2\lambda N_2}A_{G\backslash\{3\}}(\tilde z_{03},\tilde z_{13},\tilde z_{23}).
$$
The three-point function on the right hand side is given by the graphical function $f_{G_{01z}}^{(\lambda)}(z)$, see Equation (16) in \cite{gfe},
$$
A_{G\backslash\{3\}}(\tilde z_{03},\tilde z_{13},\tilde z_{23})=||\tilde z_{13}-\tilde z_{03}||^{-2\lambda N_{G_{01z}}}f_{G_{01z}}^{(\lambda)}(z),
$$
where
\begin{equation}\label{eqNG}
N_{G_{01z}}=\Big(\sum_{e\in G_{01z}}\nu_e\Big)-\frac{D}{2\lambda}V^{\mathrm{int}}
\end{equation}
($V^{\mathrm{int}}$ is the number of internal vertices in $G_{01z}$) and
$$
\frac{||\tilde z_{23}-\tilde z_{03}||^2}{||\tilde z_{13}-\tilde z_{03}||^2}=z\zz,\qquad\frac{||\tilde z_{23}-\tilde z_{13}||}{||\tilde z_{13}-\tilde z_{03}||}=(z-1)(\zz-1).
$$
For $i,j\in\{0,1,2\}$ we have
$$
||\tilde z_{i3}-\tilde z_{j3}||^2=\frac1{||z_{i3}||^2}-2\frac{z_{i3}\cdot z_{j3}}{||z_{i3}||^2||z_{j3}||^2}+\frac1{||z_{j3}||^2}=\frac{(z_{j3}-z_{i3})^2}{||z_{i3}||^2||z_{j3}||^2}=
\frac{||z_{ji}||^2}{||z_{i3}||^2||z_{j3}||^2}.
$$
Hence
$$
\frac{||\tilde z_{23}-\tilde z_{03}||^2}{||\tilde z_{13}-\tilde z_{03}||^2}=\frac{||z_{20}||^2||z_{31}||^2}{||z_{10}||^2||z_{32}||^2},
\qquad\frac{||\tilde z_{23}-\tilde z_{13}||^2}{||\tilde z_{13}-\tilde z_{03}||^2}=\frac{||z_{21}||^2||z_{30}||^2}{||z_{10}||^2||z_{32}||^2}.
$$
By summing the weights of half-edges in $G$, we obtain (see (\ref{eqNG}))
$$
\frac{D}\lambda V^{\mathrm{int}}+N_0+N_1+N_2+N_3=2\sum_{e\in G}\nu_e=2\sum_{e\in G_{01z}}\nu_e+2N_3.
$$
This gives $N_{G_{01z}}=(N_0+N_1+N_2-N_3)/2$ and hence
$$
||\tilde z_{13}-\tilde z_{03}||^{-2\lambda N_{G_{01z}}}=\Big(\frac{||z_{10}||}{||z_{30}||\,||z_{31}||}\Big)^{\lambda(-N_0-N_1-N_2+N_3)}.
$$
Collecting the factors gives the desired result.

The last statement of the proposition follows from the theory of graphical functions, see \cite{par}.
\end{proof}

After completion, graphical functions are invariant under double transpositions of the external vertices $0$, $1$, $z$, $\infty$ (Theorem 3.20 \cite{gf} and the text thereafter,
Theorem 14 in \cite{gfe}).
This gives rise to an identity for internally completed four-point integrals.

\begin{prop}\label{prop:AA}
Let $G$ be an internally completed four-point graph. Then
\begin{equation}\label{AA}
A_G(z_0,z_1,z_2,z_3)=\bigg(\frac{||z_{31}||}{||z_{20}||}\bigg)^{\lambda(N_0-N_1+N_2-N_3)}\bigg(\frac{||z_{21}||}{||z_{30}||}\bigg)^{\lambda(N_0-N_1-N_2+N_3)}A_G(z_1,z_0,z_3,z_2).
\end{equation}
\end{prop}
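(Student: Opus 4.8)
The plan is to reduce both sides to one and the same graphical function by applying Proposition \ref{prop:AfG} twice, and to show that the double transposition $z_0\leftrightarrow z_1$, $z_2\leftrightarrow z_3$ only alters the kinematic prefactor. I would apply the formula (\ref{AfG}) once to $A_G(z_0,z_1,z_2,z_3)$ and once to $A_G(z_1,z_0,z_3,z_2)$. In both evaluations one deletes the \emph{same} graph vertex $3$, so the graphical function $f_{G_{01z}}^{(\lambda)}$ belongs to the identical graph $G_{01z}=G\backslash\{3\}$; the only data that change are the position vectors entering the norms and the cross-ratio argument. The identity (\ref{AA}) then follows by cancelling the common graphical function and collecting the two prefactors.

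The crucial step is to check that the complex variable $z$ is unchanged under the transposition, which is precisely the double-transposition invariance of completed graphical functions recalled before the proposition (the Klein four-group fixes the cross-ratio). Writing $a=(z_1,z_0,z_3,z_2)$ for the permuted arguments and $a_{ij}=a_i-a_j$, I would record the six relevant norms, namely $||a_{10}||=||z_{10}||$, $||a_{32}||=||z_{32}||$, $||a_{20}||=||z_{31}||$, $||a_{31}||=||z_{20}||$, $||a_{30}||=||z_{21}||$ and $||a_{21}||=||z_{30}||$. Substituting these into (\ref{inv}) shows that both invariants are preserved,
$$
\frac{||a_{20}||^2||a_{31}||^2}{||a_{10}||^2||a_{32}||^2}=z\zz,\qquad \frac{||a_{21}||^2||a_{30}||^2}{||a_{10}||^2||a_{32}||^2}=(z-1)(\zz-1),
$$
so the argument of $f_{G_{01z}}^{(\lambda)}$ is the same $z$ in both evaluations and the graphical functions cancel in the ratio.

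It then remains to assemble the prefactors from (\ref{AfG}). The weighted degrees $N_0,N_1,N_2,N_3$ stay attached to the graph vertices and do not move with the positions, so the exponents are literally the same in both evaluations; only the norms get permuted as above. The factors $||z_{10}||^{\lambda(-N_0-N_1-N_2+N_3)}$ and $||z_{32}||^{-2\lambda N_2}$ coincide and drop out, while the remaining two factors give
$$
\frac{A_G(z_0,z_1,z_2,z_3)}{A_G(z_1,z_0,z_3,z_2)}=\bigg(\frac{||z_{31}||}{||z_{20}||}\bigg)^{\lambda(N_0-N_1+N_2-N_3)}\bigg(\frac{||z_{30}||}{||z_{21}||}\bigg)^{\lambda(-N_0+N_1+N_2-N_3)},
$$
which is exactly the prefactor in (\ref{AA}) once the second bracket is rewritten with inverted base and opposite sign in the exponent.

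The only genuine obstacle is the bookkeeping: one must keep each weighted degree $N_i$ tied to its graph vertex rather than to the moving position, and carefully track which norm is sent to which under the permutation. The conceptual content — that the graphical function is unaffected — is guaranteed the moment $z$ is seen to be invariant (had the transposition produced one of the other anharmonic values such as $1-z$ or $1/z$, the argument would fail); everything else is a routine collection of powers.
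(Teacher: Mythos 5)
Your proof is correct, and it takes a genuinely different route from the paper's. You read (\ref{AfG}) as a functional identity in the four position vectors and evaluate it a second time at the permuted tuple $(z_1,z_0,z_3,z_2)$: since the double transposition fixes both invariants in (\ref{inv}), the \emph{same} graphical function $f_{G_{01z}}^{(\lambda)}$ of the \emph{same} graph $G\backslash\{3\}$ appears at the same argument, and (\ref{AA}) drops out as a ratio of prefactors; your norm bookkeeping and the final exponent rewriting are both accurate. The paper instead interprets $A_G(z_1,z_0,z_3,z_2)$ as $A_{G'}(z_0,z_1,z_2,z_3)$ for the \emph{relabeled} graph $G'$, so that applying Proposition \ref{prop:AfG} to $G'$ deletes a different vertex (vertex $2$ of the original $G$) and produces the graphical function of a genuinely different graph $G'_{01z}$; the resulting identity between $f_{G_{01z}}^{(\lambda)}(z)$ and $f_{G'_{01z}}^{(\lambda)}(z)$ is then proved by completing $G_{01z}$ with weighted external edges, invoking the double-transposition invariance of completed graphical functions (Theorem 3.20 in \cite{gf}, Theorem 14 in \cite{gfe}), and decompleting the other way. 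Your route is shorter and more elementary: all conformal input has already been spent in Proposition \ref{prop:AfG}, and beyond that you need only the Klein four-group invariance of the cross-ratio. The paper's route, by contrast, exhibits Proposition \ref{prop:AA} as the four-point avatar of the completion theorem, which ties it to the completion formalism used throughout the rest of the paper. Two small corrections to your write-up: the invariance of $z$ under the double transposition is \emph{not} ``precisely the double-transposition invariance of completed graphical functions'' --- that is a nontrivial theorem about relabeling external vertices of a completed graph, whereas what you verify (and all your proof needs) is the elementary fact that the double transpositions preserve the two invariants (\ref{inv}). Also, (\ref{inv}) determines $z$ only up to complex conjugation; this is harmless because identical invariants permit an identical choice of representative (and graphical functions satisfy $f^{(\lambda)}(z)=f^{(\lambda)}(\zz)$ in any case), but it is worth saying.
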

\begin{proof}
The Feynman integral on the right hand side of (\ref{AA}) can be interpreted as $A_{G'}(z_0,z_1,z_2,z_3)$ for a graph $G'$ that is $G$ with swapped labels $0,1$ and $2,3$.
Note that double transpositions keep the connection between $z_0,z_1,z_2,z_3$ and $z$ in (\ref{inv}).
We use Proposition \ref{prop:AfG} to convert (\ref{AA}) into an identity for graphical functions.
$$
f_{G_{01z}}^{(\lambda)}(z)=(||z_{10}||\,||z_{32}||)^{2\lambda(N_2-N_3)}(||z_{20}||\,||z_{31}||)^{\lambda(-N_0+N_1-N_2+N_3)}(||z_{21}||\,||z_{30}||)^{\lambda(N_0-N_1-N_2+N_3)}
f_{G'_{01z}}^{(\lambda)}(z).
$$
With (\ref{inv}), this becomes
$$
f_{G_{01z}}^{(\lambda)}(z)=(z\zz)^{\lambda(-N_0+N_1-N_2+N_3)/2}((z-1)(\zz-1))^{\lambda(N_0-N_1-N_2+N_3)/2}f_{G'_{01z}}^{(\lambda)}(z).
$$
To connect the graphical function of $G_{01z}$ to $G'_{01z}$, we complete the graph $G_{01z}$. This adds edges $z\infty$, $01$, $0\infty$, and $1\infty$ such that the external
vertices have degree zero. We get
$$
\nu_{z\infty}=-N_2,\;\nu_{01}=\frac{-N_0-N_1-N_2+N_3}2,\;\nu_{0\infty}=\frac{-N_0+N_1+N_2-N_3}2,\;\nu_{1\infty}=\frac{N_0-N_1+N_2-N_3}2.
$$
The completion $\overline{G}_{01z\infty}$ of $G_{01z}$ is invariant under double transpositions,
$$
f_{\overline{G}_{01z\infty}}^{(\lambda)}(z)=f_{\overline{G}_{10\infty z}}^{(\lambda)}(z).
$$
Decompleting $\overline{G}_{10\infty z}$ gives
$$
f_{\overline{G}_{10\infty z}}^{(\lambda)}(z)=((z-1)(\zz-1))^{-2\lambda\nu_{0\infty}}(z\zz)^{-2\lambda\nu_{1\infty}}f_{G'_{01z}}^{(\lambda)}(z).
$$
Inserting the weights $\nu_{0\infty}$ and $\nu_{1\infty}$ gives the result.
\end{proof}

\begin{cor}\label{cor:doublet}
Let $G$ be an internally completed four-point graph with $N_0=N_1$ and $N_2=N_3$. Then $A_G(z_0,z_1,z_2,z_3)$ is invariant under a double transposition of $z_0,z_1$ and $z_2,z_3$,
\begin{equation}\label{AA1}
A_G(z_0,z_1,z_2,z_3)=A_G(z_1,z_0,z_3,z_2).
\end{equation}
\end{cor}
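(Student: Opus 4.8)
The plan is to obtain the corollary as an immediate specialization of Proposition \ref{prop:AA}. Identity (\ref{AA}) already records the effect of the double transposition $z_0\leftrightarrow z_1$, $z_2\leftrightarrow z_3$ on the four-point integral, up to two scalar prefactors whose exponents are $\lambda(N_0-N_1+N_2-N_3)$ and $\lambda(N_0-N_1-N_2+N_3)$. So the entire task reduces to checking that, under the hypotheses $N_0=N_1$ and $N_2=N_3$, both of these prefactors degenerate to $1$.

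First I would substitute $N_0=N_1$ and $N_2=N_3$ into the two exponents. Writing $N_0-N_1+N_2-N_3=(N_0-N_1)+(N_2-N_3)$ and $N_0-N_1-N_2+N_3=(N_0-N_1)-(N_2-N_3)$, I see that each is a combination of the two differences $N_0-N_1$ and $N_2-N_3$, both of which vanish by hypothesis. Hence both exponents are zero, each scalar prefactor in (\ref{AA}) equals $1$ regardless of the positions $z_i$, and the right-hand side collapses to $A_G(z_1,z_0,z_3,z_2)$. This is exactly (\ref{AA1}).

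There is no genuine obstacle here: all of the analytic content (the inversion, the passage to graphical functions, and the invariance of completed graphical functions under double transpositions) has already been discharged in Propositions \ref{prop:AfG} and \ref{prop:AA}. The only point worth flagging is that both balance conditions are needed simultaneously: dropping either one leaves at least one of the two exponents equal to $\pm\lambda(N_2-N_3)$ or $\lambda(N_0-N_1)$, which is generically nonzero, so that transposition invariance would then hold only up to a nontrivial conformal weight rather than as the clean equality (\ref{AA1}).
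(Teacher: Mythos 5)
Your proof is correct and matches the paper's argument, which simply notes that the corollary is an immediate consequence of (\ref{AA}); your explicit check that both exponents $\lambda(N_0-N_1+N_2-N_3)$ and $\lambda(N_0-N_1-N_2+N_3)$ vanish under the hypotheses is exactly the specialization the paper leaves implicit.
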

\begin{proof}
This is an immediate consequence of (\ref{AA}).
\end{proof}

\begin{remark}\label{remark:doublet}
By permutation symmetry in $z_0$, $z_1$, $z_2$, $z_3$, we obtain from Proposition \ref{prop:AA} formulae for double transpositions $z_0\leftrightarrow z_2$, $z_1\leftrightarrow z_3$ and
$z_0\leftrightarrow z_3$, $z_1\leftrightarrow z_2$.
\begin{align}\label{AA2}
A_G(z_0,z_1,z_2,z_3)&=\bigg(\frac{||z_{32}||}{||z_{10}||}\bigg)^{N_0+N_1-N_2-N_3}\bigg(\frac{||z_{21}||}{||z_{30}||}\bigg)^{N_0-N_1-N_2+N_3}A_G(z_2,z_3,z_0,z_1)\\\nonumber
A_G(z_0,z_1,z_2,z_3)&=\bigg(\frac{||z_{32}||}{||z_{10}||}\bigg)^{N_0+N_1-N_2-N_3}\bigg(\frac{||z_{31}||}{||z_{20}||}\bigg)^{N_0-N_1+N_2-N_3}A_G(z_3,z_2,z_1,z_0).
\end{align}
Likewise, we get
\begin{align}\label{AA3}
A_G(z_0,z_1,z_2,z_3)=A_G(z_2,z_3,z_0,z_1)&,\quad\text{if\quad$N_0=N_2$ and $N_1=N_3$},\\\nonumber
A_G(z_0,z_1,z_2,z_3)=A_G(z_3,z_2,z_1,z_0)&,\quad\text{if\quad$N_0=N_3$ and $N_1=N_2$}.
\end{align}
\end{remark}

\section{Proofs}\label{sect3}
We first prove Proposition \ref{prop:conv1}.
\begin{proof}
Because the existence of the period (\ref{PG}) does not depend on the choice of the vertices $0,1,\infty$, we can fix an (arbitrary) choice of $0,1,\infty$ in the vertices
of $\overline{G}$.

Edge cuts of the completed graph $\overline{G}$ into $G_1$ and $G_2$ are in one to one correspondence with partitions of the vertex set of $\overline{G}$
into a subset $\sV$ and its complement.
By symmetry we may choose $G_1$ to be the induced subgraph $\overline{G}[\sV]$ (i.e.\ the subgraph with all edges of $\overline{G}$ whose vertices are in $\sV$).
Let $w=\sum_{\mathrm{cut}\,e}\nu_e$ be the weight of the cut. Adding the weights of half-edges, we get the identity
$$
w+2\sum_{e\in\overline{G}[\sV]}\nu_e=\frac{D}\lambda|\sV|=\frac{D}\lambda(|\sV^{\mathrm{int}}|+|\sV^{\mathrm{ext}}|),
$$
where we took into account that some of the vertices in $\sV$ can be external, i.e.\ $0$, $1$, or $\infty$. With
$$
N_{\overline{G}[\sV]}=\Big(\sum_{e\in\overline{G}[\sV]}\nu_e\Big)-\frac{D}{2\lambda}|\sV^{\mathrm{int}}|
$$
we obtain
$$
2N_{\overline{G}[\sV]}=\frac{D}{\lambda}|\sV^{\mathrm{ext}}|-w.
$$
Because $G_2$ has at least two vertices, we can locate the external vertices such that $|\sV^{\mathrm{ext}}|\leq1$.

To prove the convergence of $P_{\overline{G}\backslash\{\infty\}}$ in (\ref{PG}), it is convenient to use the more general convergence theorem for graphical functions.
To do this, we consider the graphical function of $\overline{G}$ with an additional isolated vertex $z$.
Because $z$ is isolated, the graphical function $f_{\overline{G}\cup\{z\}}^{(\lambda)}(z)=P_{\overline{G}\backslash\{\infty\}}$ is constant (see (\ref{AG})).

From Proposition 11 of \cite{gfe} we obtain that $f_{\overline{G}\cup\{z\}}^{(\lambda)}$ exists if and only if
$$
N_{(\overline{G}\cup\{z\})[\sV]}=N_{\overline{G}[\sV]}<(|\sV^{\mathrm{ext}}|-1)\frac{D}{2\lambda}
$$
for all vertex subsets $\sV$ with $|\sV^{\mathrm{ext}}|\leq1$. This condition becomes
$$
\frac{D}{\lambda}|\sV^{\mathrm{ext}}|-w<(|\sV^{\mathrm{ext}}|-1)\frac{D}{\lambda}
$$
which is equivalent to $w>D/\lambda$.
\end{proof}

Now we prove the five-twist identity in Theorem \ref{thm:main}.
\begin{proof}
The Feynman integral $A_{G_1}(x_1,x_2,x_3,x_4)$ of the insertion $G_1$ (where the $x_i$ are the split vertices)
is determined by its Fourier transform $A_{G_1}^\ast(p_1,p_2,p_3,p_4)$.
Momentum conservation provides a $D$-dimensional Dirac $\delta$ function $\delta^D(p_1+p_2+p_3+p_4)$.
For the coefficient of the $\delta$ function we use the coordinates $p_1=z_1-z_0$, $p_2=z_2-z_1$, $p_3=z_3-z_2$, $p_4=z_0-z_3$.
This determines the Fourier transform (up to the $\delta$ function and a constant) as the position space Feynman integral of the planar dual graph
$G_1^\ast$ with edge weighs $\nu_e^\ast=D/2\lambda-\nu_e$ (see e.g.\ Theorem 1.9 in \cite{par}) whose external vertices are labeled $z_i$, corresponding to the
chain from $x_i$ to $x_{i+1}$ on the outer face (where $x_0=x_4$).

The graph $G_1^\ast$ is internally completed because for every internal vertex $x$ that corresponds to a face in $G_1$ with $N$ edges,
the sum of the weights of adjacent edges $e\sim x$ is
$$
\sum_{e\sim x}\nu_e^*=\sum_{e\in\mathrm{face}\,x}\Big(\frac{D}{2\lambda}-\nu_e\Big)=\frac{ND}{2\lambda}-\frac{(N-2)D}{2\lambda}=\frac{D}\lambda.
$$
From Corollary \ref{cor:doublet} or from Remark \ref{remark:doublet}, we get that the Feynman integral $A_{G_1^\ast}(z_0,z_1,z_2,z_3)$
is invariant under a double transposition of its arguments if the degrees of the swapped vertices do not change.
A double transposition in $G_1^\ast$ becomes a reflection along one or both diagonals in the dual graph $G_1$.
The degrees of the external vertices in $G_1^\ast$ are determined by the sum of the edge weights on the corresponding side in the outer face of $G_1$.
This proves the theorem.
\end{proof}

\section{Periods in $\phi^4$ theory}\label{sect3a}

In $\phi^4$ theory every internal vertex has four edges of weight $\nu_e=1$. The dimension is $D=4$, $\lambda=1$ and the parametric expression for the period
$P_G$ of a truncated four-point graph $G$ is (see (\ref{PsiG}) and (\ref{para}))
\begin{equation}
P_G=\int_{\alpha_e>0}\frac{\Omega}{\Psi_G(\alpha)^2}.
\end{equation}

The first primitive graph is the bubble which has loop order one
(the independent cycles $h_G$ of a graph $G$). There exists no primitive $\phi^4$ graph with two loops and one primitive $\phi^4$ graph with three loops;
see Figure \ref{fig:bubbleK4}.

\begin{figure}
\begin{tikzpicture}[scale=0.6]
\begin{scope}[local bounding box=bubble]
	\coordinate (A) at (2,0.5);
	\coordinate (B) at (-2,0.5);
	\coordinate (C) at (-2,-0.5);
	\coordinate (D) at (2,-0.5);
    \draw[li,name path=li1] (A) .. controls (1,-1) and (-1,-1) .. (B);
    \draw[li,name path=li2] (C) .. controls (-1,1) and (1,1) .. (D);
    \fill[name intersections={of=li1 and li2}]
        (intersection-1) circle (3pt)
        (intersection-2) circle (3pt);
    \node[below=0.79 of bubble] {bubble};
\end{scope}

\begin{scope}[xshift=200,local bounding box=tetra]
	\coordinate (A) at (1.5,0);
	\coordinate (B) at (0,1.5);
	\coordinate (C) at (-1.5,0);
	\coordinate (D) at (0,-1.5);
	\coordinate (O) at (0,0);
    \draw[li] (A) -- (C);
	\draw[white,line width=8pt] (B) -- (D);
    \draw[li] (O) circle (1.5);
    \draw[li] (A) -- (2,0);
    \draw[li] (B) -- (0,2);
    \draw[li] (C) -- (-2,0);
    \draw[li] (D) -- (0,-2);
    \draw[li] (B) -- (D);
    \fill (A) circle (3pt);
    \fill (B) circle (3pt);
    \fill (C) circle (3pt);
    \fill (D) circle (3pt);
    \node[below=0.2 of tetra] {tetrahedron};
\end{scope}
\end{tikzpicture}
\caption{The bubble and the tetrahedron are the smallest primitive graphs in $\phi^4$ theory.}
\label{fig:bubbleK4}
\end{figure}

The Feynman periods of the bubble and the tetrahedron in Figure \ref{fig:bubbleK4} are $1$ and $6\zeta(3)=6\sum_{k=1}^\infty k^{-3}$, respectively.

In the wake of the visionary work by D. Broadhurst and D. Kreimer \cite{BK}, Feynman periods became a prominent topic in mathematics and in physics.
Based on the combinatorics of the graph polynomial in the parametric representation (\ref{para}), in \cite{BEK} a mathematical Feynman motive was defined for certain graphs $G$.
With the theory of graphical functions \cite{gf,gfe,Shlog}, it was possible to extend the data in \cite{BK} and in \cite{Census} to hundreds of graphs up to loop order
eleven in $\phi^4$ theory. In six-dimensional $\phi^3$ theory, results exist up to loop order nine. With this data, a connection to motivic Galois theory became visible \cite{coaction}
which led to further investigations of the motivic structure of QFTs \cite{Bcoact1,Bcoact2} (the `cosmic' Galois group) and of the geometries that underlie the number content
of $\phi^4$ periods \cite{SchnetzFq,BSmod,Sc2}.

It also became possible to prove (assuming mathematical standard conjectures) that Feynman periods are not always given by multiple zeta values (higher depths analogs
of the Riemann zeta function) \cite{K3,Lproofs}. Moreover, the entire zigzag family of Feynman periods (whose first member is the tetrahedron)
could be calculated \cite{ZZ,ZZ2,ZZ3}.

In 2019, E. Panzer conjectured that a combinatorial invariant, the Hepp bound, can identify equal Feynman periods in $\phi^4$ theory \cite{EPHepp}
(i.e.\ $P_{G_1}=P_{G_2}$ if and only if the Hepp bounds of $G_1$ and $G_2$ are equal).
This conjecture is supported by numerical evidence which became available with M. Borinsky's tropical Monte Carlo integration method \cite{MBtropical}.
Tropical geometry is also the basis for recent numerical calculations of Feynman periods to very high loop orders \cite{BFlogGAMMA,BTropsamp,PHBStatistics,PHBKSPredicting,PHBJTPrimAsymp}.

Some years after the Hepp bound, E. Panzer and K. Yeats found the Martin sequence which is an infinite family of combinatorial invariants that are associated
to Feynman graphs in scalar QFTs \cite{PYMartin}.
For a primitive $\phi^4$ graph, the Martin sequence is proved to be invariant under all known identities of the period. Every known combinatorial invariant of $\phi^4$ periods
can be derived from the Martin sequence, so that the Martin sequence serves as a unified theory of $\phi^4$ invariants (see Section 1.5 of \cite{PYMartin}).
It has been shown in several examples that the Feynman period itself can be obtained from the Martin sequence \cite{PTalk}.

These recent developments support the picture that there exist more identities on $\phi^4$ periods than those that can be explained by known transformations.
The first examples are at loop order eight, where it is conjectured that
\begin{equation}\label{conjid}
P_{8,30}=P_{8,36}\qquad\text{and}\qquad P_{8,31}=P_{8,35}
\end{equation}
in the notation of \cite{Census}.
The graphs of the periods $P_{8,35}$ and $P_{8,36}$ have neither a four-vertes split nor a planar decompletion, so that they don't transform under known identities.
At nine loops and beyond, there exist many more conjectured identities of this type.

Regretfully, the five-twist is not powerful enough to explain the conjectured identities (\ref{conjid}).
Still, starting at loop order eight, it gives new relations for $\phi^4$ periods (see Section \ref{sect4}). In most cases,
$\phi^4$ periods are connected to Feynman periods of graphs that are not in $\phi^4$ theory. It seems possible that the five-twist is more powerful outside $\phi^4$ theory.
In this case it may be interesting to see if one gets more results in combination with the existing identities that map $\phi^4$ periods to non-$\phi^4$
periods.

\section{Results and conclusions}\label{sect4}
The smallest nontrivial five-twist identity occurs at seven loops, where the Feynman period $P_{7,2}$ is connected to the non-$\phi^4$ period $P^{\mathrm{non}\,\phi^4}_{7,17}$
(in the numbering of \cite{Shlog}); see Figure \ref{fig:P72}.

\begin{figure}
\begin{tikzpicture}[scale=0.6]
\begin{scope}[local bounding box=P7_2]
    \draw[li] (-1.4,-1.4) -- (2,-2);
    \draw[white,line width=8pt] (-0.8,-0.8) -- (-0.8,-2);
    \draw[li] (-0.8,-2) -- (-0.8,-0.8) -- (-2,-0.8);
    \draw[li] (2,2) -- (-1.4,-1.4);
    \draw[li] (-2,2) .. controls (-1,3) and (1,3) .. (2,2);
    \draw[li] (-2,-2) .. controls (-3,-1) and (-3,1) .. (-2,2);
    \draw[li] (-2,-2) -- (2,-2) -- (-2,2) -- (-2,-2);
    \draw[li] (-2,-2) .. controls (1.3,-5) and (5,-1.3) .. (2,2);
    \draw[li] (2,-2) .. controls (3,-1) and (3,1) .. (2,2);
    \draw[li] (-2,-0.8) -- (-0.8,-2);
    \filldraw[black] (-2.1,-2.1) rectangle (-1.9,-1.9);
    \fill (-2,-0.8) circle (3pt);
    \filldraw[black] (-2.1,1.9) rectangle (-1.9,2.1);
    \fill (-0.8,-2) circle (3pt);
    \fill (-0.8,-0.8) circle (3pt);
    \fill[fill=black!30] (-1.4,-1.4) circle (3pt);
    \filldraw[black] (1.9,-2.1) rectangle (2.1,-1.9);
    \fill (0,0) circle (3pt);
    \filldraw[black] (1.9,1.9) rectangle (2.1,2.1);
\end{scope}

\begin{scope}[xshift=230,local bounding box=Q7_17]
    \draw[li] (1.4,1.4) -- (2,-2);
    \draw[li] (-2,-2) .. controls (1,-1) and (1,0) .. (1.4,1.4);
    \draw[white,line width=8pt] (0.8,0.8) -- (2,0.8);
    \draw[white,line width=8pt] (0,0) -- (1.5,-1.5);
    \draw[li] (0.8,2) -- (0.8,0.8) -- (2,0.8);
    \draw[li] (-2,-2) -- (1.4,1.4);
    \draw[li] (-2,2) .. controls (-1,3) and (1,3) .. (2,2);
    \draw[li] (-2,-2) .. controls (-3,-1) and (-3,1) .. (-2,2);
    \draw[li] (2,2) -- (2,-2) -- (-2,2) -- (2,2);
    \draw[li] (-2,-2) .. controls (1.3,-5) and (5,-1.3) .. (2,2);
    \draw[li] (2,-2) .. controls (3,-1) and (3,1) .. (2,2);
    \draw[li] (2,0.8) -- (0.8,2);
    \draw[dashed] (1.4,1.4) -- (2,2);
    \filldraw[black] (-2.1,-2.1) rectangle (-1.9,-1.9);
    \fill (2,0.8) circle (3pt);
    \filldraw[black] (-2.1,1.9) rectangle (-1.9,2.1);
    \fill (0.8,2) circle (3pt);
    \fill (0.8,0.8) circle (3pt);
    \fill[fill=black!30] (1.4,1.4) circle (3pt);
    \filldraw[black] (1.9,-2.1) rectangle (2.1,-1.9);
    \fill (0,0) circle (3pt);
    \filldraw[black] (1.9,1.9) rectangle (2.1,2.1);
\end{scope}
    \node[below=-1 of P7_2] {$P_{7,2}$};
    \node[below=-1 of Q7_17] {$P^{\mathrm{non}\,\phi^4}_{7,17}$};
    \node at (4.2,0) {$=$};
\end{tikzpicture}
\caption{The smallest nontrivial five-twist links $P_{7,2}$ to the non-$\phi^4$ period $P^{\mathrm{non}\,\phi^4}_{7,17}$.
The dashed edge on the right hand side has weight $-1$ (a numerator edge), all other edges have weight $1$.
The four twist vertices are plotted as squares and $\infty$ is the gray vertex (see also the first graph in Figure \ref{fig:G1} for the insertion $G_1$).}
\label{fig:P72}
\end{figure}

The five-twist establishes an infinite family of subgraphs $G_1$ that fulfill all conditions for a transformation.
In $\phi^4$ theory, the restriction of this family to a given maximum number of vertices is rather small.
This is because, in addition to the restrictions which are explained above, the graph $G_1$ must not be symmetric under the reflection along an admissible diagonal.
Moreover, in $\phi^4$ theory, the graph $G_1$ must not have more than four vertices of degree three.
Otherwise, it connects to $\infty$ with more than four edges which brings $\overline{G}$ outside $\phi^4$ theory (the vertex $\infty$ must have an edge of negative weight
to compensate for the $>4$ edges that connect to $G_1$). The smallest admissible $G_1$ is depicted on the left of Figure \ref{fig:G1}.
The middle graph does not directly give rise to a transformation of a $\phi^4$ graph because it has five internal vertices of degree three.
It still may give a transformation of a $\phi^4$ period if it sits in a chain of identities which lead outside $\phi^4$ theory and then back into $\phi^4$ theory again.

In general, combinations with other transformations are important:
It may happen that one obtains a transformation of a $\phi^4$ period by first mapping it to a non-$\phi^4$ period using a
known identity and then applying the five-twist to the non-$\phi^4$ graph. If the resulting graph is non-$\phi^4$, then this transformation is
lost by the restriction to $\phi^4$ graphs. If the non-$\phi^4$ graph is connected by another known identity to a $\phi^4$ graph,
it may happen that identities between $\phi^4$ graphs are lost by the restriction of the five-twist to $\phi^4$ graphs.

Nevertheless, as a first step towards understanding the five-twist, in this section we restrict ourselves to applying the five-twist directly
to $\phi^4$ graphs.

In fact, all known graphs $G_1$ that can be used to directly transform a $\phi^4$ graph, also lead to a (different) nontrivial four vertex split in the completed graph $\overline{G}$.
So, in all known examples, the five-twist is not the only transformation of $\overline{G}$. It can happen, that a twist on some four vertex split
is equal to the five-twist. The right hand side of Figure \ref{fig:G1} is such an example where the vertices of the standard twist are gray, see Section \ref{sect4}.

Identities (both inside $\phi^4$ theory and from $\phi^4$ theory to non-$\phi^4$ graphs) are rather rare compared to the twist (\ref{twist}):
Inside $\phi^4$ theory, the twist has one identity at seven loops and eight identities at eight loops (Table 4 in \cite{Census}) while the first
non-trivial five-twists inside $\phi^4$ theory emerge at nine loops (see (\ref{5ids})).

Up to loop order eight, we get the following identities between $\phi^4$ periods and non-$\phi^4$ periods (in the notation of \cite{Census,Shlog}).
$$
P_{7,2}=P^{\mathrm{non}\,\phi^4}_{7,17},\;P_{8,6}=P^{\mathrm{non}\,\phi^4}_{8,149},\;P_{8,14}=P^{\mathrm{non}\,\phi^4}_{8,460},\;
P_{8,18}=P^{\mathrm{non}\,\phi^4}_{8,75},\;P_{8,19}=P^{\mathrm{non}\,\phi^4}_{8,150},\;P_{8,20}=P^{\mathrm{non}\,\phi^4}_{8,379}.
$$
The first identity is depicted in Figure \ref{fig:P72}. The first and the last identities can also be explained as twists (\ref{twist}) (on sets
of vertices that differ from the five-twist). The identites two to five are not twist identities. 

Beyond eight loops, we have no list of non-$\phi^4$ graphs. In this case, we only looked for identities inside $\phi^4$ theory without detours via intermediate non-$\phi^4$ graphs.
By exhaustive search, we obtain
\begin{align}\label{5ids}
&P_{9,78}=P_{9,93},\quad P_{9,158}=P_{9,160},\nonumber\\
&P_{10,225}=P_{10,283},\quad P_{10,227}=P_{10,284},\quad P_{10,553}=P_{10,554},\quad P_{10,867}=P_{10,912},\nonumber\\
&P_{11,269}=P_{11,338},\quad P_{11,271}=P_{10,339},\quad P_{11,924}=P_{11,965},\quad P_{11,926}=P_{11,1072},\nonumber\\
&P_{11,928}=P_{11,1073},\quad P_{11,967}=P_{11,1076},\quad P_{11,969}=P_{11,1083},\quad P_{11,1117}=P_{10,1121},\nonumber\\
&P_{11,2930}=P_{11,2955},\quad P_{11,3879}=P_{11,3880},\quad P_{11,3881}=P_{11,3882},\quad P_{11,3884}=P_{11,3885}.
\end{align}
All five-twist identities up to loop order eleven can also be obtained from standard twists (\ref{twist}). So, from direct use of the five-twist we obtain no new identity for
$\phi^4$ periods up to loop order eleven.
For some graphs $G_1$, this is easily explained: If e.g.\ the graph $G_1$ is the right graph in Figure \ref{fig:G1}, then
the three horizontal vertices in the middle have degree three and connect (after completion) to the vertex $\infty$. If the full graph $\overline{G}$ is in $\phi^4$ theory,
then $\infty$ connects to no other vertices, so that the gray vertices give a four-vertex split of $\overline{G}$. A nontrivial twist on these vertices (which all have
degree two after completion) is identical to a reflection along a diagonal of the external square.

Starting from loop order eight, there exist five-twists linking $\phi^4$ periods to non-$\phi^4$ periods that cannot be obtained as standard twists at any vertices.
By exhaustive search, all such five-twists up to eleven loops belong to one family of insertions which is obtained from the left graph in Figure \ref{fig:G1} by
adding edges and vertices. The first members of this family are depicted in Figure \ref{fig:G1a}, where the external vertices are black squares.
Note that only the first graph has less than four vertices of degree three. In all other graphs, $\infty$ connects to four points in the insertion, so that
for $\phi^4$ graphs the four external vertices also give a four-vertex split. The five-twist still differs from the standard twist, because the latter is given by reflections
whose axes cut the faces of the external square. In this case, the five-twist and the standard twist generate the full dihedral group $D_4$ of the square.

It would be interesting to analyse this family of insertions combinatorially. It is unclear if more families of insertions exist that are non-trivial in the sense that
they give transformations that are not also standard twists.

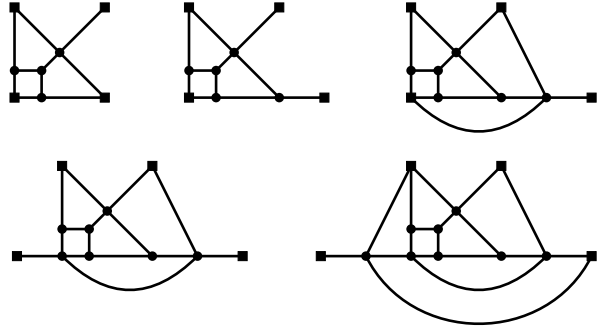
\begin{figure}
\begin{tikzpicture}[scale=0.6]
\begin{scope}[local bounding box=G11]
    \draw[li] (-1,-1) -- (1,-1) -- (-1,1) -- (-1,-1);
    \draw[li] (1,1) -- (-0.4,-0.4);
    \draw[li] (-0.4,-1) -- (-0.4,-0.4) -- (-1,-0.4);
    \filldraw[black] (-1.1,-1.1) rectangle (-0.9,-0.9);
    \filldraw[black] (-1.1,0.9) rectangle (-0.9,1.1);
    \filldraw[black] (0.9,-1.1) rectangle (1.1,-0.9);
    \filldraw[black] (0.9,0.9) rectangle (1.1,1.1);
    \fill (-1,-0.4) circle (3pt);
    \fill (-0.4,-1) circle (3pt);
    \fill (-0.4,-0.4) circle (3pt);
    \fill (0,0) circle (3pt);
\end{scope}

\begin{scope}[xshift=110,local bounding box=G12]
    \draw[li] (-1,-1) -- (1,-1) -- (-1,1) -- (-1,-1);
    \draw[li] (1,1) -- (-0.4,-0.4);
    \draw[li] (-0.4,-1) -- (-0.4,-0.4) -- (-1,-0.4);
    \draw[li] (1,-1) -- (2,-1);
    \filldraw[black] (-1.1,-1.1) rectangle (-0.9,-0.9);
    \filldraw[black] (-1.1,0.9) rectangle (-0.9,1.1);
    \filldraw[black] (1.9,-1.1) rectangle (2.1,-0.9);
    \filldraw[black] (0.9,0.9) rectangle (1.1,1.1);
    \fill (1,-1) circle (3pt);
    \fill (-1,-0.4) circle (3pt);
    \fill (-0.4,-1) circle (3pt);
    \fill (-0.4,-0.4) circle (3pt);
    \fill (0,0) circle (3pt);
\end{scope}

\begin{scope}[xshift=250,local bounding box=G13]
    \draw[li] (-1,-1) -- (1,-1) -- (-1,1) -- (-1,-1);
    \draw[li] (1,1) -- (-0.4,-0.4);
    \draw[li] (-0.4,-1) -- (-0.4,-0.4) -- (-1,-0.4);
    \draw[li] (1,-1) -- (3,-1);
    \draw[li] (1,1) -- (2,-1);
    \draw[li] (-1,-1) .. controls (0,-2) and (1,-2) .. (2,-1);
    \filldraw[black] (-1.1,-1.1) rectangle (-0.9,-0.9);
    \filldraw[black] (-1.1,0.9) rectangle (-0.9,1.1);
    \filldraw[black] (2.9,-1.1) rectangle (3.1,-0.9);
    \filldraw[black] (0.9,0.9) rectangle (1.1,1.1);
    \fill (2,-1) circle (3pt);
    \fill (1,-1) circle (3pt);
    \fill (-1,-0.4) circle (3pt);
    \fill (-0.4,-1) circle (3pt);
    \fill (-0.4,-0.4) circle (3pt);
    \fill (0,0) circle (3pt);
\end{scope}

\begin{scope}[xshift=30,yshift=-100,local bounding box=G14]
    \draw[li] (-2,-1) -- (1,-1) -- (-1,1) -- (-1,-1);
    \draw[li] (1,1) -- (-0.4,-0.4);
    \draw[li] (-0.4,-1) -- (-0.4,-0.4) -- (-1,-0.4);
    \draw[li] (1,-1) -- (3,-1);
    \draw[li] (1,1) -- (2,-1);
    \draw[li] (-1,-1) .. controls (0,-2) and (1,-2) .. (2,-1);
    \filldraw[black] (-2.1,-1.1) rectangle (-1.9,-0.9);
    \filldraw[black] (-1.1,0.9) rectangle (-0.9,1.1);
    \filldraw[black] (2.9,-1.1) rectangle (3.1,-0.9);
    \filldraw[black] (0.9,0.9) rectangle (1.1,1.1);
    \fill (-1,-1) circle (3pt);
    \fill (2,-1) circle (3pt);
    \fill (1,-1) circle (3pt);
    \fill (-1,-0.4) circle (3pt);
    \fill (-0.4,-1) circle (3pt);
    \fill (-0.4,-0.4) circle (3pt);
    \fill (0,0) circle (3pt);
\end{scope}

\begin{scope}[xshift=250,yshift=-100,local bounding box=G14]
    \draw[li] (-3,-1) -- (1,-1) -- (-1,1) -- (-1,-1);
    \draw[li] (1,1) -- (-0.4,-0.4);
    \draw[li] (-0.4,-1) -- (-0.4,-0.4) -- (-1,-0.4);
    \draw[li] (1,-1) -- (3,-1);
    \draw[li] (1,1) -- (2,-1);
    \draw[li] (-2,-1) -- (-1,1);
    \draw[li] (-1,-1) .. controls (0,-2) and (1,-2) .. (2,-1);
    \draw[li] (-2,-1) .. controls (-1,-3) and (2,-3) .. (3,-1);
    \filldraw[black] (-3.1,-1.1) rectangle (-2.9,-0.9);
    \filldraw[black] (-1.1,0.9) rectangle (-0.9,1.1);
    \filldraw[black] (2.9,-1.1) rectangle (3.1,-0.9);
    \filldraw[black] (0.9,0.9) rectangle (1.1,1.1);
    \fill (-2,-1) circle (3pt);
    \fill (-1,-1) circle (3pt);
    \fill (2,-1) circle (3pt);
    \fill (1,-1) circle (3pt);
    \fill (-1,-0.4) circle (3pt);
    \fill (-0.4,-1) circle (3pt);
    \fill (-0.4,-0.4) circle (3pt);
    \fill (0,0) circle (3pt);
\end{scope}
\end{tikzpicture}
\caption{Nontrivial five-twists from adding edges to the upper leftmost graph.}
\label{fig:G1a}
\end{figure}

After completion, also the first graph in the family has a four-vertex split. In Figure \ref{fig:P9_103} this split is indicated by the gray vertices.
So, up to eleven loops, all graphs with a nontrivial five-twist also have at least one nontrivial standard twist.
We hence need to prove independence of the five-twist from previously knon identities. It suffices to do this for the example shown in Figure \ref{fig:P9_103}
\begin{figure}
\begin{tikzpicture}[scale=0.7]
\begin{scope}[local bounding box=P9_103]
    \draw[li] (-4,-1.4) .. controls (-4,4) and (0.5,3.7) .. (2,2);
    \draw[white,line width=8pt] (-2,2) .. controls (-6,0) and (-6,-3) .. (-1.4,-4);
    \draw[li] (-1.4,-1.4) -- (-1.4,-4) -- (-4,-1.4) -- (-2,-2);
    \draw[white,line width=8pt] (-2,-2) -- (0,-2);
    \draw[li] (-2,-2) -- (2,-2) -- (-2,2) -- (-2,-2);
    \draw[li] (-1.4,-4) -- (2,-2);
    \draw[white,line width=8pt] (-2,-2) .. controls (1,-5.7) and (5.7,-1) .. (2,2);
    \draw[li] (-2,2) .. controls (-6,0) and (-6,-3) .. (-1.4,-4);
    \draw[li] (-2,-2) .. controls (1,-5.7) and (5.7,-1) .. (2,2);
    \draw[li] (2,-2) -- (2,2) -- (-1.4,-1.4);
    \draw[li] (-2,-0.8) -- (-0.8,-2);
    \draw[li] (-0.8,-2) -- (-0.8,-0.8) -- (-2,-0.8);
    \draw[li] (-4,-1.4) -- (-2,2);
    \fill (-2,-2) circle (3pt);
    \fill[black!30] (-2,-0.8) circle (3pt);
    \fill (-2,2) circle (3pt);
    \fill[black!30] (-0.8,-2) circle (3pt);
    \fill (-0.8,-0.8) circle (3pt);
    \fill (-1.4,-1.4) circle (3pt);
    \fill (2,-2) circle (3pt);
    \fill[black!30] (0,0) circle (3pt);
    \fill (2,2) circle (3pt);
    \fill[black!30] (-1.4,-4) circle (3pt);
    \fill(-4,-1.4) circle (3pt);
\end{scope}
    \node[below=-0.5 of P9_103] {$P_{9,103}$};
\end{tikzpicture}
\caption{The only four vertex split of the $\phi^4$ graph $P_{9,103}$ is at the gray vertices. The twist at these vertices differs from the five-twist along the
subgraph that is depicted at the left of Figure \ref{fig:G1}. The vertex $\infty$ is the center of the small square.}
\label{fig:P9_103}
\end{figure}

\begin{lem}\label{lemnewid}
The five-twist is an independent identity in $\phi^4$ theory. In general, it cannot be obtained by chains of known identities.
\end{lem}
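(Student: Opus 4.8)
The plan is to establish independence by exhibiting a single concrete counterexample: the graph $P_{9,103}$ of Figure \ref{fig:P9_103}, together with the graph $P_{9,103}'$ obtained from it by the five-twist along the leftmost insertion $G_1$ of Figure \ref{fig:G1}. Independence means that $P_{9,103}'$ cannot be reached from $P_{9,103}$ by any composition of the previously known identities---the (standard) twist \eqref{twist}, the Fourier identity, and the Fourier split. Since all known identities act inside a fixed loop order and generate a finite group on the (completed) primitive graphs of that order (as noted after Theorem \ref{thm:main}), the set of graphs reachable from $P_{9,103}$ under the known identities is a finite, explicitly computable orbit. The core of the argument is therefore to compute this orbit and verify that the five-twist image $P_{9,103}'$ lies outside it.

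First I would make the reachability set precise. The known transformations all require a structural feature of the completed graph $\overline{G}$: the twist and the Fourier split need a nontrivial four-vertex split, and the Fourier identity needs a planar decompletion. So the orbit is generated by iterating the following: locate every nontrivial four-vertex split and every planar decompletion of the current completed graph, apply each admissible twist, Fourier identity, and Fourier split, and collect the resulting completed graphs up to isomorphism. I would seed this with $\overline{P}_{9,103}$ and close under the three operations. By the finiteness remark, this terminates. The output is a finite list $\mathcal{O}$ of completed primitive graphs at loop order nine.

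The key computational step is then to check that $\overline{P}_{9,103}'\notin\mathcal{O}$ up to graph isomorphism. For this it suffices to find a single isomorphism invariant that separates $\overline{P}_{9,103}'$ from every member of $\mathcal{O}$; the most efficient choice is to compute a canonical form (e.g.\ via \texttt{nauty}) for each completed graph and compare. Crucially, every graph in the known orbit $\mathcal{O}$ shares the Feynman period $P_{9,103}$, and by Theorem \ref{thm:main} so does $\overline{P}_{9,103}'$; hence period (and likewise the Hepp bound and the Martin sequence, which are invariant under all known identities) cannot distinguish them. This is precisely why the separation must be done at the level of graph isomorphism rather than by any period-type invariant---the five-twist produces a graph with the same period but a genuinely different completed graph, not isomorphic to any known-identity image.

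The main obstacle is the orbit computation: Figure \ref{fig:P9_103} asserts that $P_{9,103}$ has only one nontrivial four-vertex split (at the gray vertices) and no useful planar decompletion, so the orbit under known identities is small and tractable, but one must carefully enumerate all decompletions and all four-vertex splits of each intermediate graph to be certain the closure is complete---missing a split would leave a gap in the argument. I expect the verification to reduce to showing that the unique twist available on $\overline{P}_{9,103}$ (at the gray vertices) together with any Fourier operations yields only graphs isomorphic to $\overline{P}_{9,103}$ itself or to a handful of others, none of which is $\overline{P}_{9,103}'$. Once the finite orbit is enumerated and the canonical forms are compared, the claim that the five-twist is an independent identity---not obtainable as a chain of known identities---follows immediately.
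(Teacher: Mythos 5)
Your proposal is correct and follows essentially the same route as the paper: the paper's proof also uses $P_{9,103}$ as the counterexample, notes it has no planar decompletion and a unique four-vertex split (the gray vertices of Figure \ref{fig:P9_103}), so that the orbit under known identities consists of just two graphs, and then observes that the five-twist image lies outside this orbit. Your additional remarks---that the closure must be iterated over intermediate graphs, that non-membership is checked at the level of graph isomorphism, and that period-type invariants are useless for the separation---are correct elaborations of exactly the verification the paper performs.
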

\begin{proof}
We use the graph $P_{9,103}$ in Figure \ref{fig:P9_103}. The graph has no planar decompletion. The only four-vertex split is indicated by gray vertices in Figure \ref{fig:P9_103}.
A twist of the subgraph does not lead to further transformations, so that the classical identities give an equivalence class of two graphs. The five-twist along the left graph
in Figure \ref{fig:G1} gives a transformation that is not in this equivalence class.
\end{proof}

All calculations were done using the procedure {\tt fivetwist} in the Maple package {\tt HyperlogProcedures} \cite{Shlog}.
We generated all five-twists of $\phi^4$ periods up to loop order twelve on a small server. Because of its simple combinatorial nature, the generation of five-twists is fast.
Only the special implementation of graphs in Maple led to the fact that the program uses significant amounts of memory at high loop orders (making use of the server convenient).


\begin{thebibliography}{99}
\bibitem{PHBStatistics} {\bf P.-H. Balduf}, {\it Statistics of Feynman amplitudes in $\phi^4$-theory}, JHEP 11, 1-73 (2023).
\bibitem{PHBKSPredicting} {\bf P.-H. Balduf, K. Shaban}, {\it Predicting Feynman periods in $\phi^4$-theory}, JHEP 11, 038 (2024).
\bibitem{PHBJTPrimAsymp} {\bf P.-H. Balduf, J. Th{\"u}ringen}, {\it Primitive asymptotics in $\phi^4$ vector theory}, arXiv:2412.08617 [hep-th] (2024).
\bibitem{BEK} {\bf S. Bloch, H. Esnault, D. Kreimer}, {\it On Motives Associated to Graph Polynomials}, Comm.\ Math.\ Phys.\ 267, 181-225 (2006).
\bibitem{MBtropical} {\bf M. Borinsky}, {\it Tropical Monte Carlo quadrature for Feynman integrals}, Ann.\ Inst.\ H. Poincare D Comb.\ Phys.\ Interact.\ 10, No.\ 4, 635-685 (2023).
\bibitem{BTropsamp} {\bf M. Borinsky}, {\it Tropicalized quantum field theory and global tropical sampling}, arXiv:2508.14263 [mat-ph] (2025).
\bibitem{BFlogGAMMA} {\bf M. Borinsky, A. Favorito}, {\it Feynman integrals at large loop order and the $\log$-$\Gamma$ distribution}, arXiv:2503.07803 [hep-th] (2025).
\bibitem{gfe} {\bf M. Borinsky, O. Schnetz}, {\it Graphical functions in even dimensions}, Comm.\ in Number Theory and Physics 16, No.\ 3, 515-614 (2022).
\bibitem{BK} {\bf D.J. Broadhurst, D. Kreimer}, {\it Knots and numbers in $\phi^4$ theory to 7 loops and beyond}, Int. J. Mod.\ Phys.\ C 6, 519 (1995).
\bibitem{Bcoact1} {\bf F.C.S. Brown}, {\it Feynman amplitudes, coaction principle, and cosmic Galois group}, Comm.\ in Number Theory and Physics 11, No.\ 3, 453-555 (2017).
\bibitem{Bcoact2} {\bf F.C.S. Brown}, {\it Notes on motivic periods}, Comm.\ in Number Theory and Physics 11, No.\ 3, 557-655 (2017).
\bibitem{K3} {\bf F.C.S. Brown, O. Schnetz}, {\it A K3 in $\phi^4$}, Duke Mathematical Journal, Vol.\ 161, No.\ 10, 1817-1862 (2012).
\bibitem{BSmod} {\bf F.C.S. Brown, O. Schnetz}, {\it Modular forms in quantum field theory}, Comm.\ in Number Theory and Physics 7, No.\ 2, 293-325 (2013).
\bibitem{ZZ} {\bf F.C.S. Brown, O. Schnetz}, {\it Single-valued multiple polylogarithms and a proof of the zig-zag conjecture}, Jour.\ of Numb.\ Theory 148, 478-506 (2015).
\bibitem{ZZ2} {\bf S.E. Derkachov, A.P. Isaev, L.A. Shumilov}, {\it Conformal triangles and zig-zag diagrams}, Phys.\ Lett.\ B 830, 137150 (2022).
\bibitem{ZZ3} {\bf S.E. Derkachov, A.P. Isaev, L.A. Shumilov}, {\it Ladder and zig-zag Feynman diagrams,operator formalism and conformal triangles}, JHEP 6, 059 (2023).
\bibitem{SYM} {\bf J.M. Drummond, C. Duhr, P. Heslop, J. Pennington, V.A. Smirnov}, {\it Leading singularities and off-shell conformal integrals}, JHEP 8, 133-190 (2013).
\bibitem{par} {\bf M. Golz, E. Panzer, O. Schnetz}, {\it Graphical functions in parametric space}, Lett.\ Math.\ Phys.\ 107, No.\ 6, 1177-1182 (2017).
\bibitem{Furtherphi4} {\bf S. Hu, O. Schnetz, J. Shaw, K.A. Yeats}, {\it Further investigations into the graph theory of $\phi^4$-periods and the $c_2$ invariant},
Ann.\ Inst.\ H. Poincare D, Comb.\ Phys.\ Interact. 9, No.\ 3, 473-524 (2022).
\bibitem{KIR} {\bf G. Kirchhoff}, {\it Ueber die Aufl\"osung der Gleichungen, auf welche man bei der Untersuchung der linearen Vertheilung galvanischer Str\"ome gef\"uhrt wird},
Annalen der Physik und Chemie 72, No.\ 12, 497-508 (1847).
\bibitem{Lproofs} {\bf A. Logan}, {\it New realizations of modular forms in Calabi-Yau threefolds arising from $\phi^4$ theory}, J. Number Theory, 184, 342-383 (2018).
\bibitem{PTalk} {\bf E. Panzer}, {\it Combinatorial Feynman integrals and Ap\'ery}, talk presented at the `Workshop on combinatorics and algebraic geometry in QFT',
MPI, Bonn, Germany, 21st Aug.\ 2024.
\bibitem{EPHepp} {\bf E. Panzer}, {\it Hepp's bound for Feynman graphs and matroids}, Ann.\ Inst.\ H. Poincare D Comb.\ Phys.\ Interact.\ 10, No.\ 1, 31-119 (2022).
\bibitem{coaction} {\bf E. Panzer, O. Schnetz}, {\it The Galois coaction on $\phi^4$ periods}, Comm.\ in Number Theory and Physics 11, No.\ 3, 657-705 (2017).
\bibitem{PYMartin} {\bf E. Panzer, K.A. Yeats} {\it Feynman symmetries of the Martin and $c_2$ invariants of regular graphs}, Comb.\ Theory 5, No.\ 1, \#10 (2025).
\bibitem{Census} {\bf O. Schnetz}, {\it Quantum periods: A census of $\phi^4$ transcendentals}, Comm.\ Number Theory and Physics 4, no.\ 1, 1-48 (2010).
\bibitem{SchnetzFq} {\bf O. Schnetz}, {\it Quantum field theory over $\FF_q$}, Electron.\ J. Comb.\ 18N1:P102 (2011).
\bibitem{gf} {\bf O. Schnetz}, {\it Graphical functions and single-valued multiple polylogarithms}, Comm.\ in Number Theory and Physics 8, No.\ 4, 589–675 (2014).
\bibitem{Sc2} {\bf O. Schnetz}, {\it Geometries in perturbative quantum field theory}, Comm.\ in Number Theory and Physics 15, No.\ 4, 743 – 791 (2021).
\bibitem{Sgft} {\bf O. Schnetz}, {\it Graphical functions with spin}, JHEP No.\ 6, 53 (2025).
\bibitem{Shlog} {\bf O. Schnetz}, {\tt HyperlogProcedures}, V0.8, Maple package {\tt https://github.com/oliverschnetz/HyperlogProcedures} (2025).
\end{thebibliography}
\end{document}